\def\thefigure{\thesection.\@arabic\c@figure}
\def\fps@figure{h, t}
\def\thetable{\thesection.\@arabic\c@table}
\def\fps@table{h, t}
\begin{document}

\newtheorem{theorem}{Theorem}[section]
\newtheorem{definition}[theorem]{Definition}
\newtheorem{lemma}[theorem]{Lemma}
\newtheorem{remark}[theorem]{Remark}
\newtheorem{proposition}[theorem]{Proposition}
\newtheorem{corollary}[theorem]{Corollary}
\newtheorem{example}[theorem]{Example}

\def\balpha{\boldsymbol\alpha}
\def\bom{\boldsymbol\omega}
\def\bOm{\boldsymbol\Omega}
\def\bPi{\boldsymbol\Pi}
\def\below#1#2{\mathrel{\mathop{#1}\limits_{#2}}}



\title{A geometric theory of selective decay in fluids with advected quantities}

\author{
Fran\c{c}ois Gay-Balmaz$^{1}$ and Darryl D. Holm$^{2}$
}
\addtocounter{footnote}{1}
\footnotetext{Laboratoire de M\'et\'eorologie Dynamique, \'Ecole Normale Sup\'erieure/CNRS, Paris, France. 
\texttt{gaybalma@lmd.ens.fr}
\addtocounter{footnote}{1} }
\footnotetext{Department of Mathematics, Imperial College, London SW7 2AZ, UK. 
\texttt{d.holm@ic.ac.uk}
\addtocounter{footnote}{1}}

\date{ 
PACS Numbers: 
Geometric mechanics, 02.40.Yy; 
\\Hamiltonian and Lagrangian mechanics, 45.20.Jj, 47.10.Df; 
\\Fluids, mathematical formulations, 47.10.A- }

\maketitle

\makeatother

\begin{abstract}
Modifications of the equations of ideal fluid dynamics with advected quantities are introduced that allow selective decay of either the energy $h$ or the Casimir quantities $C$ in the Lie-Poisson formulation. The dissipated quantity (energy or Casimir, respectively) is shown to decrease in time until the modified system reaches an equilibrium state consistent with ideal energy-Casimir equilibria, namely $\delta(h+C)=0$. The result holds for Lie-Poisson equations in general, independently of the Lie algebra and the choice of Casimir. This selective decay process is illustrated with a number of examples in 2D and 3D magnetohydrodynamics (MHD). 
\end{abstract}

\maketitle

\tableofcontents



\newpage

\section{Introduction}

Historically, the hypothesis of selective decay in MHD turbulence assumed that the total energy was to be minimized, subject to the conservation of certain ideal invariants. This hypothesis was consistent with the observed long term evolution of freely decaying MHD turbulence at high magnetic Reynolds number, $\mathrm{R}_\mathrm{m} = {U L}/{\eta}$, where $U$ is a typical velocity scale of the flow, $L$ is a typical length scale of the flow and $\eta$ is the magnetic resistivity. This situation was called an inverse cascade \cite{FPLM1975}, because the energy flux is predominantly toward small scales while the flux of the ideal invariant known as the magnetic helicity passes toward the larger scales and ostensibly creates the spiral structures observed in the decay of MHD turbulence. See \cite{MaMo1980,MoBa1999} for further historical discussion of the selective decay hypothesis for MHD turbulence. 


The aim of this paper is to { develop a geometric theory of} selective decay of either the energy or { the Casimirs (the invariants of the Lie-Poisson bracket of the ideal theory) and illustrate it for} the MHD Hamiltonian structure, following the work of  \cite{FGBHo2012} for geophysical fluid dynamics. We interpret the resulting modifications of the { ideal MHD} equations as a means of dynamically and nonlinearly parameterizing the interactions between disparate scales, by introducing new nonlinear pathways to dissipation, based on selective decay. 
Remarkably, the theory developed here for selective decay of \emph{either} the energy $h$ or the Casimir $C$ { contains} the standard energy-Casimir equilibria of the ideal equations, obtained from a critical point of the \emph{sum} $\delta(h+C)=0$ \cite{HMRW1985}. { Thus, our selective decay theory is always \emph{consistent} with the energy-Casimir equilibrium conditions, in that the energy-Casimir equilibria are also equilibria of the modified equations. However, the presence of the selective decay terms allows a new balance that enlarges the class of asymptotic states beyond those that satisfy the energy-Casimir equilibrium conditions associated with $\delta(h+C)=0$. In particular, the geometric selective decay process introduced here may in some cases tend towards states that satisfy only a \emph{subset} of the energy-Casimir equilibrium conditions. This is explained in the proof of the main result, Theorem \ref{SteadyStates} and is illustrated for the cases of compressible and incompressible MHD in Section \ref{MHD-sec}. }


\subsection*{Casimirs}

A Poisson manifold is a manifold $P$ with a Poisson bracket $\{\,\cdot\,,\,\cdot\,\}$ defined on the
space of smooth functions on $P$; see, e.g., \cite{MaRa1994}. A Poisson system with Hamiltonian $h : P \to \mathbb{R}$ yields the time-evolution of any smooth function $f : P \to \mathbb{R}$ by computing the solution curves of the dynamical equation $df/dt = \{f, h\}$.
Poisson systems often arise by Lie-group reduction of Hamiltonian systems with symmetry on Lie groups, in which case the Poisson bracket is called a Lie-Poisson bracket. Examples include the Euler equations of ideal incompressible fluid dynamics, for which the symmetry group is the particle-relabeling group \cite{ArKh1998}, and the equations for a heavy top, for which the symmetry group is the Euclidean group \cite{Ho2011}. 

\begin{definition}[Casimirs]\rm
Casimirs on a Poisson manifold $(P, \{\,\cdot\,,\,\cdot\,\})$ are functions $C$ that satisfy $\{C,h\}=0$, for all $h$, that is they are constant under the flow generated by the Poisson bracket for any choice of the Hamiltonian. The existence of Casimirs is thus due to the degeneracy of the Poisson bracket. In the case of reduction by symmetry on Lie groups, this degeneracy arises when passing from the canonical Hamiltonian formulation in terms of Lagrangian variables to the Lie--Poisson formulation in terms of symmetry-reduced variables.
\end{definition}

An example is the reduction of the Hamiltonian description of rigid body dynamics from the six-dimensional phase space $T^*SO(3)$ of the Euler angles for $SO(3)$ rotations, to the three-dimensional space of angular momenta in $\mathfrak{so}(3)^*\simeq\mathbb{R}^3$. For ideal fluids, the reduction is from the Lagrangian variables to the Eulerian variables, which are invariant under relabeling of Lagrangian particles. Thus, Casimir conservation is a property of the Lie--Poisson bracket that results from the reduction by symmetry, not the choice of Hamiltonian. Indeed, the Casimirs commute under the Lie--Poisson bracket with any Hamiltonian that is expressed in terms of the symmetry-reduced variables. This means the  motion generated by the Lie--Poisson bracket in the symmetry-reduced variables takes place along intersections of level sets of the Hamiltonian $h$ and a Casimir $C$. 

Casimirs have been used in stability analyses of fluid and plasma equilibria which extend traditional energy methods to the \textit{energy-Casimir method} \cite{Ar1969,HMRW1985}. This method applies in determining the stability of a certain class of equilibrium solutions $p_e\in P$ under the Poisson flow $df/dt = \{f, h\}$ defined on the Poisson manifold $P$ and generated by a Hamiltonian $h$. Namely, the energy-Casimir method supposes that there is a function $C$, the Casimir, which is constant under the flow generated by the Poisson bracket (since $\{C,h\}=0$ for all $h$) and that the equilibrium solution $p_e$ is a critical point of the sum $h_C:=h+C$, so that $\delta h_C=\langle Dh_C(p_e),\delta p\rangle =0$ for a nondegenerate pairing $\langle\,\cdot\,,\,\cdot\,\rangle$. 
Linear Lyapunov stability follows if the critical point $p_e$ is a local extremum of $h_C$, that is, if $\delta^2h_C(p_e)$ is positive definite or negative definite. This condition implies \textit{formal stability} since $\delta^2h_C(p_e)$ is conserved by the linearized equations around the equilibrium solution $p_e$ and if it is either positive definite or negative definite, then it defines a norm for Lyapunov stability, see \cite{HMRW1984,HMRW1985}. 
Nonlinear Lyapunov stability follows by a further argument, which is available when the functional $h_C$ is convex in the neighborhood of $p_e$. One may use the energy-Casimir method to seek stable equilibrium states. Such equilibrium states are called stable \emph{energy-Casimir} equilibria. Because of  the exchange symmetry of $h_C:=h+C$ under $C\leftrightarrow h$ these equilibrium states may be regarded as either extrema of the energy $h$ on a level set of a Casimir $C$, or vice versa, as extrema of a Casimir $C$ on a level set of the energy $h$.

\subsection*{Aim of the paper}

The aim of the present paper is to introduce a dissipative modification of the Lie--Poisson flow whose dynamics will tend toward conditions { that include} the energy-Casimir equilibria of the ideal unmodified equations, starting from any initial state on $P$.
In particular, the present paper investigates the effects of imposing selective decay of a { certain} Casimir while preserving the energy, and vice versa, imposing selective decay of energy while preserving the chosen Casimir. This is accomplished by using the Lie--Poisson structure of the ideal theory, and interpreting the resulting modifications of the equations as nonlinear pathways for selective dissipation that parameterize the observed effects of the interactions among disparate scales of motion. This type of modification is computable at a single time scale, so it may be useful in situations where it would be computationally prohibitive to rely on the slower, indirect effects of viscosity and other types of diffusivity which typically affect both the energy and the Casimir.  
In particular, the present paper takes the Casimir dissipation approach of \cite{FGBHo2012} further, by applying it { to 2D and 3D compressible and incompressible magnetohydrodynamics (MHD).} 

The new feature of the present paper is its introduction of modified Lie--Poisson equations that describe the selective decay of ideal fluids with \emph{advected quantities}.  Mathematically, ideal flows that advect fluid properties such as mass, heat and magnetic field may be described by the \emph{combined} actions of Lie groups on their dual Lie algebras and also on the vector spaces in which the advected quantities are defined \cite{HMR1998}. The Casimirs for such flows differ from the Casimirs of simple ideal fluid motion. These differences introduced by the advective flow of fluid properties yield new fluid equilibria and new nonlinear mechanisms for selective decay of either energy or Casimirs. In particular, the proof of Theorem \ref{SteadyStates} of the present paper shows that { under the geometric selective nonlinear decay process introduced here the flow tends toward conditions which include and extend the class of} energy-Casimir equilibria of the \emph{unmodified} fluid equations satisfying $\delta(h+C)=0$, for extrema of the energy $h$ on a level set of a Casimir $C$, or vice versa, for extrema of a Casimir $C$ on a level set of the energy $h$. These extrema may be either maxima or minima, depending on the choice of sign of a parameter $(\theta)$ and the sign of the Casimir appearing in the modified equations.

Various types of modifications of the Poisson bracket for Hamiltonian systems have been proposed in the literature  in order to include dissipation. This is usually accomplished by adding a symmetric bilinear form to the Poisson bracket, as initiated by \cite{Ka1984,Mo1984,Gr1984,Mo1986}. 
Specific classes of energy dissipation were introduced later in \cite{Br1991,Ka1991,BKMR1996,HoPuTr2008,BrElHo2008} by application of a double bracket.
See also  \cite{VaCaYo1989,Shep1990}, in which a modification of the transport velocity was used to impose energy dissipation with fixed Casimirs in an incompressible fluid. As the paper proceeds, we will comment further on the relationships of the present results with those of previous theories.

The theory we develop here uses the Lie-Poisson Hamiltonian framework for ideal fluids to treat either Casimir decay at fixed energy, or energy decay at fixed Casimirs. That is, the same theory is used here to treat selective decay of either quantity, so that the choice of which mechanism to investigate can be made, motivated for example by the effects seen in large-scale numerical simulations of the fully dissipative equations. The switch from decay of the chosen Casimir $C$ at fixed Hamiltonian $h$ to decay of the Hamiltonian $h$ at fixed Casimir $C$ is accomplished by a simple exchange  $C\leftrightarrow h$ in one of the key formulas, e.g., in equation \eqref{LP_form}. { In fact, either type of selective decay leads to the same equilibrium conditions.}

\paragraph{Selective decay of Casimirs.}

Selective decay of Casimirs is an effect that was first observed in numerical simulations of 2D incompressible turbulence cascades by \cite{MaMo1980} as the rapid decay of the enstrophy (turbulence intensity) while the energy stayed essentially constant. This observed disparity in the time scales for decay of the energy and the enstrophy in 2D turbulence has a profound effect on its energy spectrum.  In \cite{FGBHo2012},  selective decay by Casimir dissipation was introduced by modifying the vorticity equation, based on the well-known Lie--Poisson structure of the Hamiltonian formulation for vorticity dynamics in the case of 2D incompressible flows of ideal fluids   \cite{Ar1966,Ar1969,Ar1978,HMR1998}. In this framework, the earlier work of  \cite{VaCaYo1989,Shep1990} on selective decay of energy at fixed values of the Casimirs was recovered by the exchange of the Casimirs and the Hamiltonian in  formula \eqref{LP_form} for the modified vorticity dynamics. This earlier work studied selective decay for the purpose of finding stable equilibrium states. As we show here, imposing selective decay in fluid flows with advected quantities may also { enlarge the class of} stable energy-Casimir equilibrium states; see Theorem \ref{SteadyStates}. 

In 3D incompressible fluid turbulence, the energy tends to decay more rapidly that the Casimirs do. This is contrary to selective decay of turbulence in 2D, so a different modeling approach is required in 3D. Thus, a comprehensive theory must be capable of passing within the same framework from selective decay of the Casimir in 2D to selective decay of the energy in 3D. This is accomplished in the present theory by taking advantage of its exchange symmetry under $C\leftrightarrow h$.  

\subsection{Parameterizing subgridscale effects on macroscales}

In a previous paper \cite{FGBHo2012} the problem of parameterizing the interactions of disparate scales in fluid flows was addressed by considering a property of two-dimensional incompressible turbulence. The property  considered was a type of selective decay, in which a Casimir of the ideal formulation (enstrophy, in the case of 2D incompressible flows) was observed to decay rapidly in time, compared to the much slower decay of energy. That is, the Casimir was observed to decay, while the energy stayed essentially constant. 
The previous paper introduced a nonlinear fluid mechanism that produced the selective decay by enforcing Casimir dissipation at constant energy. This mechanism introduced an additional geometric feature into the description of the flow; namely, it introduced a Riemannian inner product on the space of Eulerian fluid variables. The resulting dissipation mechanism based on decay of enstrophy in 2D flows turned out to be related to the numerical method of \emph{anticipated vorticity} discussed in \cite{SaBa1981,SaBa1985}. 
Several examples were given and a general theory of selective decay was developed that used the Lie--Poisson structure of the ideal theory. A scale-selection operator allowed the resulting modifications of the fluid motion equations to be interpreted in these examples as parameterizing the nonlinear dynamical interactions between disparate scales. The type of modified fluid equations that was derived in the previous paper was also proposed for turbulent geophysical flows, where it is computationally prohibitive to rely on the slower, indirect effects of a realistic viscosity, such as in interactions between large-scale, coherent, oceanic flows and the much smaller eddies.  

The selective decay mechanism discussed in the previous paper was based on Casimir dissipation in the example of 2D incompressible flows, treated as a dynamical parameterization of the interactions between disparate scales. Following that example, the paper discussed the general theory of selective decay by Casimir dissipation in the Lie algebraic context that underlies the Lie--Poisson Hamiltonian formulation of ideal fluid dynamics, as explained in, e.g., \cite{HMR1998}. In particular, it developed the Kelvin circulation theorem and Lagrange-d'Alembert variational principle for Casimir dissipation. 
In the Lagrange-d'Alembert formulation, the modification of the motion equation to impose selective decay was seen as an energy-conserving constraint force.
Finally, the previous paper extended the Casimir dissipation theory to include fluids that possess advected quantities such as heat, mass, buoyancy, magnetic field, etc., by using the standard method of Lie--Poisson brackets for semidirect-product actions of Lie groups on vector spaces reviewed in \cite{HMRW1985}. The main subsequent examples were the rotating shallow water equations  and the 3D Boussinesq equations for rotating stratified incompressible fluid flows. 

\paragraph{Plan of the paper.}
The present paper pursues further the selective decay approach based on \cite{FGBHo2012}, whose main results are reviewed in the remainder of this Introduction. The formulations of selective decay of either Casimirs or energy on semidirect products is summarized in Section \ref{sec_SDP}. { The applications to compressible and incompressible 2D and 3D magnetohydrodynamics (MHD) in Section \ref{MHD-sec} illustrate the use of the method for fluid dynamics. In particular, we derive the modified MHD equations that enforce either selective decay of Casimirs at fixed energy, or vice versa.}

\subsection{Summary of key equations in \cite{FGBHo2012}}

Let us recall that ideal incompressible 2D fluid flows admit a Hamiltonian formulation in terms of a Lie--Poisson bracket $\{\,\cdot\,,\,\cdot\,\}_+$, given by \cite{Ar1966,Ar1969,Ar1978}
\begin{equation}
\frac{df( \omega )}{dt}
=\{f,h\}_+( \omega )
= \left\langle \omega , \left[ \frac{\delta f}{\delta \omega }, \frac{\delta h}{\delta \omega }\right] \right\rangle  
:= \int_\mathcal{D} \omega \left\{\frac{\delta f}{\delta \omega }, \frac{\delta h}{\delta \omega }\right\}dx\,dy \,.
\label{ArnoldLPB}
\end{equation}
Here $\omega$ is the vorticity of the flow, the bracket $\{\,\cdot\,,\,\cdot\,\}$ is the 2D Jacobian, written as 
$\{f,h\}=J(f,h)=f_xh_y-h_xf_y$, and the angle bracket $\langle\,\cdot\,,\,\cdot\,\rangle$ in (\ref{ArnoldLPB}) is the $L^2$ pairing in the domain $\mathcal{D}$ of the $(x,y)$ plane. For convenience, we shall take the domain $\mathcal{D}$ to be periodic, so we need not worry about boundary terms arising from integrations by parts.   
Two types of conservation laws are associated with the Hamiltonian formulation. The first one is the conservation of energy, i.e., the Hamiltonian $h( \omega )$. Conservation law of energy arises from the antisymmetry of the Lie--Poisson bracket as
\[
\frac{dh( \omega )}{dt}
=\{h,h\}_+ ( \omega )
= 0
\,,
\]
for any given choice of $h$. The second type of conservation law arises because the Lie--Poisson bracket has a kernel (i.e., is degenerate), which means there exist functions $C( \omega )$ for which 
\begin{equation}
\frac{dC( \omega )}{dt}
=\{C,h\}_+ ( \omega ) = 0
\,,
\label{Casimir-def}
\end{equation}
for any Hamiltonian $h(\omega)$.
Functions that satisfy this relation for any Hamiltonian are called \emph{Casimir functions}. (Lie called them distinguished functions, according to \cite{Ol2000}.) For example, the Casimirs for the Lie--Poisson bracket \eqref{ArnoldLPB} in the Hamiltonian formulation of 2D incompressible ideal fluid motion are
\[
C_\Phi (\omega) = \int_\mathcal{D} \Phi(\omega)\,dx\,dy 
\,,
\]
for any smooth function $\Phi$, \cite{Ar1966,Ar1969,Ar1978}.

Ideal 3D fluids also admit this type of Lie--Poisson bracket, given by
\[
\{f,g\}_+( \mathbf{u})=\int_ \mathcal{D} \mathbf{u} \cdot \left[  \frac{\delta f}{\delta \mathbf{u}  }, \frac{\delta g}{\delta \mathbf{u}  }\right] \,d^3x\, ,
\]
where $\mathbf{u}$, with $ \operatorname{div}\mathbf{u}  =0$, is the velocity and $[\,\cdot\,,\,\cdot\,]$ denotes the Lie bracket of vector fields, i.e., $[ \mathbf{u} , \mathbf{v} ]= \mathbf{v} \cdot \nabla \mathbf{u} - \mathbf{u} \cdot \nabla \mathbf{u} $.

\paragraph{Lie--Poisson brackets and Casimirs.} In this paper, we shall denote by $ \mathfrak{g}  $ a Lie algebra, with Lie brackets $ [\,\cdot\,,\,\cdot\,]$, and by $ \mathfrak{g}  ^\ast $ a space in weak nondegenerate duality with $ \mathfrak{g}$. That is, there exists a bilinear map (called a pairing) $ \left\langle\,\cdot\,, \,\cdot\,\right\rangle : \mathfrak{g}  ^\ast \times \mathfrak{g}  \rightarrow \mathbb{R}  $, such that for any $ \xi \in \mathfrak{g}  $, the condition $ \left\langle \mu , \xi \right\rangle = 0$, for all $\mu \in \mathfrak{g}  ^\ast $ implies $ \xi = 0 $ and, similarly, for any $ \mu \in \mathfrak{g}  ^\ast $, the condition $ \left\langle \mu , \xi \right\rangle = 0$ for all $\xi  \in \mathfrak{g}  $ implies $ \mu  = 0 $. Recall that $ \mathfrak{g}  ^\ast $ carries a natural Poisson structure, called the \emph{Lie--Poisson structure}, and given in terms of the pairing by
\begin{equation}\label{right_LP} 
\{f,h\}_+( \mu )= \left\langle \mu , \left[ \frac{\delta f}{\delta \mu }, \frac{\delta h}{\delta \mu }\right] \right\rangle,
\end{equation} 
(see, e.g., \cite{MaRa1994}). Here $f, g \in \mathcal{F} ( \mathfrak{g}  ^\ast )$ are real valued functions defined on $ \mathfrak{g}  ^\ast $, and $ {\delta f}/{\delta \mu }\in \mathfrak{g}  $ denotes the functional derivative of $f$, defined through the duality pairing $ \left\langle \,\cdot\,, \,\cdot\,\right\rangle $, by
\[
\left\langle \frac{\delta f}{\delta \mu }, \delta \mu \right\rangle = \left.\frac{d}{d\varepsilon}\right|_{\varepsilon=0} f( \mu + \varepsilon \delta \mu )\,.
\]
The Lie-Poisson bracket in \eqref{right_LP} is obtained by symmetry reduction of the canonical Poisson structure on the phase space $T^*G$ of the Lie group $G$ with Lie algebra $ \mathfrak{g}  $. The symmetry underlying this reduction is given by right translation by $G$ on $T^*G$. In the case of ideal fluid motion, this symmetry corresponds to relabeling symmetry of the Lagrangian in Hamilton's principle. 

\paragraph{Lie--Poisson (LP) equations.}
The Lie--Poisson (LP) equations with Hamiltonian $h: \mathfrak{g} ^\ast \rightarrow \mathbb{R}  $ are, by definition, the Hamilton equations  associated to the Poisson structure \eqref{right_LP}, i.e., 
\begin{equation}\label{right_LP-dyn} 
\frac{df}{dt} =\{f, h\}_+
\quad\hbox{for all}\quad
f \in \mathcal{F} ( \mathfrak{g}  ^\ast )
\,.
\end{equation} 
They are explicitly written as
\[
\partial _t \mu + \operatorname{ad}^*_{ \frac{\delta h}{\delta \mu }} \mu =0,
\]
where $ \operatorname{ad}^*_ \xi: \mathfrak{g}  ^\ast \rightarrow \mathfrak{g}  ^\ast $ is the coadjoint operator defined by $ \left\langle \operatorname{ad}^*_ \xi \mu , \eta \right\rangle = \left\langle \mu , [ \xi , \eta ] \right\rangle $.  One recalls that the coadjoint operator is equivalent to the Lie derivative, i.e., $\operatorname{ad}^*_ \xi \mu =\pounds_ \xi \mu$, when $\mu\in \mathfrak{g} ^\ast \simeq \Omega ^1\otimes {\rm dVol}$ is a 1-form density, as occurs in the case when $\mu$ is the momentum density in ideal fluid dynamics.

\paragraph{Casimir functions.} 
A function $C: \mathfrak{g}  ^\ast \rightarrow \mathbb{R} $ is called a \textit{Casimir function} for the Lie--Poisson structure (\ref{right_LP}) if it verifies $\{C, f\}_+=0$ for all functions $f \in \mathcal{F} ( \mathfrak{g}  ^\ast )$ or, equivalently 
\[
\operatorname{ad}^*_ { \frac{\delta C}{\delta \mu }}\mu=0
\,,\]
for all $ \mu \in \mathfrak{g}  ^\ast $. A Casimir function $C$ is therefore a conserved quantity for Lie--Poisson equations associated to any choice of the Hamiltonian $h$.

\paragraph{Symmetric bilinear form.} 
Below, we will denote by $ \gamma _ \mu $ a (possibly $ \mu $-dependent, $ \mu \in \mathfrak{g}  ^\ast $) \emph{symmetric bilinear form} $ \gamma _\mu : \mathfrak{g}  \times  \mathfrak{g}  \rightarrow \mathbb{R}$. This form is said to be \textit{positive} if
\[
\gamma _\mu ( \xi , \xi ) \geq 0, \quad \text{for all $ \xi \in \mathfrak{g}  $}\,. 
\]

\begin{definition}[Casimir-dissipative LP equation]\label{def_Casimir_Diss}  Given a Casimir function $C(\mu) $, for $\mu\in\mathfrak{g}^*$, a positive symmetric bilinear form $ \gamma_\mu$, and a real number $ \theta >0$, we consider the following modification of the Lie--Poisson (LP) dynamical equation \eqref{right_LP-dyn} to produce the Casimir \emph{dissipative} LP equation:
\begin{align}\label{LP_form} 
\frac{df(\mu) }{dt} 
&= \left\{ f,h \right\} _+
- \theta\, \gamma _\mu \left( \left[ \frac{\delta f}{\delta \mu } , \frac{\delta h}{\delta \mu } \right], \left[ \frac{\delta C}{\delta \mu } , \frac{\delta h}{\delta \mu } \right]\right),
\end{align} 
for arbitrary functions $ f,h: \mathfrak{g}  ^\ast \rightarrow \mathbb{R}$.

Equation \eqref{LP_form} yields the following equation for $\mu$,
\begin{equation}\label{Casimir_dissipation} 
\partial _t \mu + \operatorname{ad}^*_ { \frac{\delta h}{\delta \mu }  } \mu 
= \theta \operatorname{ad}^*_ { \frac{\delta h}{\delta \mu }  } 
 \left[ \frac{\delta C}{\delta \mu }, \frac{\delta h}{\delta \mu }\right] ^\flat  
\,,
\end{equation} 
where $ \flat : \mathfrak{g}  \rightarrow \mathfrak{g}  ^\ast $ is the flat operator associated to $ \gamma_\mu  $, that is, for $ \xi \in \mathfrak{g}  $, the linear form $ \xi ^\flat \in \mathfrak{g}  ^\ast $ is defined by $\left\langle  \xi ^\flat , \eta  \right\rangle = \gamma _\mu ( \xi , \eta )$, for all $\eta \in \mathfrak{g}  $.
\end{definition}

Note that the flat operator $ \flat $ need not be either injective or surjective. Note also that in equation \eqref{Casimir_dissipation} above, the flat operator is evaluated at $ \mu $. It is important to observe that the modification term depends on both the given Hamiltonian function $h$ and the chosen Casimir $C$.
It is convenient to write \eqref{Casimir_dissipation} as
\begin{equation}\label{Casimir_dissipation_tilde} 
\partial _t \mu + \operatorname{ad}^*_ { \frac{\delta h}{\delta \mu }  } \widetilde{\mu} =0,\quad\text{with modified momentum}\quad   \widetilde{ \mu }:= \mu+ \theta \left[ \frac{\delta h}{\delta \mu }, \frac{\delta C}{\delta \mu }\right] ^\flat.
\end{equation}

The energy is preserved by the dynamics of equations (\ref{LP_form}) and (\ref{Casimir_dissipation}), since we have
\[
\frac{dh(\mu) }{dt}= \left\{ h,h \right\} _+\,- \theta\, \gamma_\mu  \left( \left[ \frac{\delta h}{\delta \mu } , \frac{\delta h}{\delta \mu } \right], \left[ \frac{\delta C}{\delta \mu } , \frac{\delta h}{\delta \mu } \right]\right) =0.
\]
However, when $ \theta >0$ the Casimir function $C$ is dissipated since
\begin{equation}\label{Casimirdissipation} 
\frac{dC(\mu) }{dt}
= \left\{ C,h \right\} _+ -\, \theta\, \gamma _\mu \left( \left[ \frac{\delta C}{\delta \mu } , \frac{\delta h}{\delta \mu } \right], \left[ \frac{\delta C}{\delta \mu } , \frac{\delta h}{\delta \mu } \right]\right) 
=-\, \theta \left \| \left[ \frac{\delta C}{\delta \mu } , \frac{\delta h}{\delta \mu } \right]\right \| _\gamma ^2 ,
\end{equation} 
where $\| \xi \|^2_\gamma  :=\gamma_\mu  ( \xi , \xi )$ is the quadratic form (possibly degenerate) associated to the positive bilinear form $ \gamma_\mu  $.

{
\begin{remark}[Evolution of additional Casimirs]\rm 
If the Lie-Poisson bracket $\{\,\cdot\,,\,\cdot\,\} _+$ in a given case admits an additional Casimir $\widetilde{C}$, then  $\widetilde{C}$ will evolve according to \eqref{LP_form} as
\begin{equation}\label{Casimirdissipation2} 
\frac{d\widetilde{C}(\mu) }{dt}
= \left\{ \widetilde{C},h \right\} _+ 
-\, \theta\, \gamma _\mu \left( \left[ \frac{\delta \widetilde{C}}{\delta \mu } , 
\frac{\delta h}{\delta \mu } \right], \left[ \frac{\delta C}{\delta \mu } , \frac{\delta h}{\delta \mu } \right]\right)=-\, \theta\, \gamma _\mu \left( \left[ \frac{\delta \widetilde{C}}{\delta \mu } , 
\frac{\delta h}{\delta \mu } \right], \left[ \frac{\delta C}{\delta \mu } , \frac{\delta h}{\delta \mu } \right]\right),
\end{equation}
where we have used $ \left\{ \widetilde{C},h \right\} _+ =0$ because $\widetilde{C}$ is a Casimir.
\end{remark}
}

\begin{remark}[Left-invariant case]\rm Recall that the Lie--Poisson structure \eqref{right_LP} is associated to \textit{right} $G$-invariance on $T^*G$. We have made this choice because ideal fluids are naturally right-invariant systems in the Eulerian representation. Other systems, such as rigid bodies, are \textit{left} $G$-invariant. In this case, one obtains the Lie--Poisson brackets $\{f,g\}_-( \mu )= -\left\langle \mu , \left[ \frac{\delta f}{\delta \mu }, \frac{\delta h}{\delta \mu }\right] \right\rangle $ and this leads to the following change of sign in the Casimir-dissipative LP equation \eqref{Casimir_dissipation}: 
\begin{equation}\label{anticipated-ad_motion_left} 
\partial _t \mu - \operatorname{ad}^*_ { \frac{\delta h}{\delta \mu }  } \mu 
= \theta \operatorname{ad}^*_ { \frac{\delta h}{\delta \mu }  } 
 \left[ \frac{\delta C}{\delta \mu }, \frac{\delta h}{\delta \mu }\right] ^\flat  
\,.
\end{equation} 
The modified momentum is now
\begin{equation}\label{mod-mom-left} 
\widetilde{ \mu }:= \mu- \theta \left[ \frac{\delta h}{\delta \mu }, \frac{\delta C}{\delta \mu }\right] ^\flat
\end{equation}
and we have, in comparison with \eqref{LP_form},
\[
\frac{df(\mu) }{dt}=\left\{ f,h \right\} _- - \theta \gamma_\mu  \left( \left[ \frac{\delta f}{\delta \mu } , \frac{\delta h}{\delta \mu } \right], \left[ \frac{\delta C}{\delta \mu } , \frac{\delta h}{\delta \mu } \right]\right).
\]
\end{remark}

\begin{remark}[Relation with the metriplectic approach]{\rm By the following argument one may see that our selective decay model \eqref{LP_form} fits into the framework of the metriplectic dynamics initiated in the work of \cite{Mo1984}, \cite{Ka1984}, \cite{Gr1984}, \cite{Mo1986}.

Let $(P,\{\,,\})$ be a Poisson manifold, let $h\in C^\infty(P) $ be the Hamiltonian of the system and $C\in C^\infty(P) $ a Casimir function. The metriplectic dynamics is formulated as follows (see e.g., \cite{BlMoRa2012}):
\begin{equation}\label{metriplectic} 
\dot f=\{f,h\}+(f,C),\quad \text{for all $f \in C^\infty(P)$}, 
\end{equation} 
where the bracket $(f,g):= \left\langle \mathbf{d} f, \kappa (\mathbf{d} g) \right\rangle $ is $ \mathbb{R}  $-bilinear, symmetric, and positive (or negative) semidefinite, with $ \kappa :T^*P \rightarrow TP$ a vector bundle map. Moreover, it is assumed that $(f,h)=0$, for all $f\in C^\infty(P)$.
We shall now show that \eqref{LP_form} fits into the context of the metriplectic dynamics. In our case, the Poisson manifold is $P= \mathfrak{g}  ^\ast $ endowed with the Lie--Poisson bracket $\{\,\cdot\,,\cdot\,\}_\pm$. Upon replacing $C$ by an arbitrary function $g \in C^\infty( \mathfrak{g}  ^\ast )$,  the dissipative term in \eqref{LP_form} reads
\[
- \theta \gamma_\mu  \left( \left[ \frac{\delta f}{\delta \mu } , \frac{\delta h}{\delta \mu } \right], \left[ \frac{\delta g}{\delta \mu } , \frac{\delta h}{\delta \mu } \right]\right)= \theta \left\langle \frac{\delta f}{\delta \mu },\operatorname{ad}^*_{ \frac{\delta h}{\delta \mu }} \left( \left[ \frac{\delta g}{\delta \mu } , \frac{\delta h}{\delta \mu } \right] ^{\flat _ \mu }\right) \right\rangle = \left\langle \frac{\delta f}{\delta \mu }, \kappa_\mu  \left( \frac{\delta g}{\delta \mu } \right) \right\rangle = (f,g),
\]
where we have defined the vector bundle map $ \kappa _\mu :T ^\ast _\mu \mathfrak{g}  ^\ast \rightarrow T _\mu \mathfrak{g}  ^\ast $ by 
\begin{equation}\label{metriplectic2} 
\kappa _ \mu ( \xi ):= \theta \operatorname{ad}^*_{ \frac{\delta h}{\delta \mu }} \left( \left[ \xi , \frac{\delta h}{\delta \mu } \right] ^{\flat _ \mu }\right)
,
\end{equation} 
where $\flat _\mu: \mathfrak{g}\to \mathfrak{g}^*$ is the flat operator with respect to the pairing given by $\gamma_\mu$.
For this vector bundle map, the Lie-Poisson form  \eqref{LP_form} belongs to the class \eqref{metriplectic} of metriplectic systems.}
\end{remark}
\color{black}

\begin{definition}[Energy-dissipative LP equation]\label{def_energy_Diss} \rm
As discussed in \cite{FGBHo2012}, by simply exchanging $h$ and $C$ in the $ \theta $-term of equation \eqref{LP_form} one obtains an energy-dissipative LP equation that preserves the chosen Casimir $C$:
\begin{align}\label{LP_formE} 
\frac{df(\mu) }{dt} 
&= \left\{ f,h \right\} _+
- \theta\, \gamma _\mu \left( \left[ 
\frac{\delta f}{\delta \mu } , \frac{\delta C}{\delta \mu } 
\right], 
\left[ \frac{\delta h}{\delta \mu } , \frac{\delta C}{\delta \mu } \right]\right),
\end{align} 
for arbitrary functions $ f,h: \mathfrak{g}  ^\ast \rightarrow \mathbb{R}$.
\end{definition}

\begin{remark}[Energy-dissipative formulation]\rm
In the energy-dissipative formulation \eqref{LP_formE}, we have $dC/dt=0$ and energy decay given by
\begin{align}\label{Energy_dissipation} 
\frac{dh(\mu) }{dt} 
&= \left\{ h,h \right\} _+
- \theta\, \gamma _\mu \left( \left[ 
\frac{\delta h}{\delta \mu } , \frac{\delta C}{\delta \mu } 
\right], 
\left[ \frac{\delta h}{\delta \mu } , \frac{\delta C}{\delta \mu } \right]\right)
=
-\, \theta\, \left\| \left[ 
\frac{\delta h}{\delta \mu } , \frac{\delta C}{\delta \mu } 
\right]\right\|^2_\gamma
\,.
\end{align} 
By symmetry under the exchange $C\leftrightarrow h$, the two rates of decay are the same in \eqref{Casimirdissipation}  and \eqref{Energy_dissipation}.

An additional Casimir $\widetilde{C}$ would evolve according to equation \eqref{LP_formE} as
\begin{align}\label{addedCasimir-eqn} 
\frac{d\widetilde{C}(\mu) }{dt} 
&= 
- \theta\, \gamma _\mu \left( \left[ 
\frac{\delta \widetilde{C}}{\delta \mu } , \frac{\delta C}{\delta \mu } 
\right], 
\left[ \frac{\delta h}{\delta \mu } , \frac{\delta C}{\delta \mu } \right]\right).
\end{align} 
Equation \eqref{LP_formE} also leads to the following equation for $\mu$,
\begin{equation}\label{LP_form-hdis-mu} 
\partial _t \mu + \operatorname{ad}^*_ { \frac{\delta h}{\delta \mu }  } \mu 
= -\,\theta \operatorname{ad}^*_ { \frac{\delta C}{\delta \mu }  } 
 \left[ \frac{\delta C}{\delta \mu }, \frac{\delta h}{\delta \mu }\right] ^\flat  
\,,
\end{equation} 
where $ \flat : \mathfrak{g}  \rightarrow \mathfrak{g}  ^\ast $ is again the flat operator associated to $ \gamma_\mu  $. Of course, these equations follow from \eqref{Casimir_dissipation} by exchanging $C$ and $h$ in the $ \theta $-term.
\end{remark}

\paragraph{Energy-Casimir equilibria and their linear stability analysis}
\begin{theorem}[Energy-Casimir critical points]\rm$\,$\\
Energy-Casimir critical points $ \mu _e $ satisfying $\delta (h+C)( \mu _e )=0$ are steady states of the
modified LP equations \eqref{Casimir_dissipation} and \eqref{LP_form-hdis-mu} for
both the  Casimir-dissipative and the energy-dissipative cases. 
\end{theorem}

\begin{proof}
The energy-dissipative motion equation \eqref{LP_form-hdis-mu} is equivalent to
\begin{equation}\label{LP_form-hdis-mu-steady} 
\partial _t \mu + \operatorname{ad}^*_ { \frac{\delta (h+C)}{\delta \mu }  } \mu 
= -\,\theta \operatorname{ad}^*_ { \frac{\delta C}{\delta \mu }  } 
 \left[ \frac{\delta (h+C)}{\delta \mu }, \frac{\delta h}{\delta \mu }\right] ^\flat,
\end{equation} 
and the Casimir-dissipative motion equation \eqref{Casimir_dissipation} is equivalent to
\begin{equation}\label{LP_form-Cdis-mu-steady} 
\partial _t \mu + \operatorname{ad}^*_ { \frac{\delta (h+C)}{\delta \mu }  } \mu 
= -\,\theta \operatorname{ad}^*_ { \frac{\delta h}{\delta \mu }  } 
 \left[ \frac{\delta (h+C)}{\delta \mu }, \frac{\delta h}{\delta \mu }\right] ^\flat,
\end{equation}
Hence, the energy-Casimir stationarity condition $\delta (h+C)=0$ produces steady
states $\partial _t \mu_e=0$ of the modified LP equations in both cases.
\end{proof}

\begin{remark}[Linearized equations]\rm
The analysis of the linearized stability of energy-Casimir equilibria in the
energy-dissipative case, for example, proceeds from the linearization of equation
\eqref{LP_form-hdis-mu-steady} around $\mu_e$ for which $\delta (h+C)(\mu_e)=0$. 
Setting $\delta\mu=:\tilde{\mu}$ and defining
\[
\delta^2 (h+C)=\langle \tilde{\mu}, D^2(h+C)(\mu_e)\cdot\tilde{\mu}\rangle
=: \langle \tilde{\mu}\,,\,\tilde{\mu}^\sharp\rangle
\]
allows the linearization of \eqref{LP_form-hdis-mu-steady} to be written as
\begin{equation}\label{LP_form-hdis-mu-steady-stability} 
\partial _t \tilde{\mu} + \operatorname{ad}^*_ { \tilde{\mu}^\sharp  } \mu_e
= -\,\theta \operatorname{ad}^*_ { \frac{\delta C}{\delta \mu_e  }} 
 \left[ \tilde{\mu}^\sharp \,,\, \frac{\delta h}{\delta \mu_e }\right] ^\flat  
\,.
\end{equation} 
The linearized equation in the neighborhood of $\mu_e$ for the corresponding
Casimir-dissipative case is obtained from exchanging $h\leftrightarrow C$.

The presence of the $\theta$-term on the right hand side of \eqref{LP_form-hdis-mu-steady-stability}
alters the linearised spectrum of the energy-Casimir equilibria of the unmodified LP equations. 
An investigation of the effects of selective decay on the linear stability properties of the energy-Casimir 
equilibria is quite likely to be interesting. However, we shall defer this investigation to another work. 
 
\end{remark}


\subsection{Comparison with previous approaches}

\subsubsection{General energy dissipative systems}
Note that given a Poisson manifold $(P, \{\,,\})$, we can formulate the following energy dissipative system
\begin{equation}\label{general_formulation} 
\dot f=\{f,h\}- ((f,h)), \quad \text{for all $f \in C^\infty(P)$}, 
\end{equation}
where the bracket $((f,g))= \left\langle \mathbf{d} f, \Sigma ( \mathbf{d} g) \right\rangle$ is $ \mathbb{R}  $-bilinear, symmetric, and positive semidefinite, with $ \Sigma  :T^*P \rightarrow TP$ a vector bundle map. We can impose that a given function $C \in C^\infty(P)$ is preserved by the system by assuming $((f,C))=0$, for all $ f \in C^\infty(P)$. Our dissipative system \eqref{LP_formE} fits into this general picture by choosing
\[
((f,g))= \theta\, \gamma _\mu \left( \left[ 
\frac{\delta f}{\delta \mu } , \frac{\delta C}{\delta \mu } 
\right], 
\left[ \frac{\delta g}{\delta \mu } , \frac{\delta C}{\delta \mu } \right]\right),\quad \text{i.e.,} \quad  \Sigma _ \mu ( \xi )= \theta \operatorname{ad}^*_{ \frac{\delta C}{\delta \mu }} \left( \left[ \frac{\delta C}{\delta \mu }, \xi \right]^ { \flat _\mu }   \right),
\]
corresponding to the exchange $h\leftrightarrow C$ in equation \eqref{metriplectic2}.

\subsubsection{Double-bracket formulations}
There are apparent similarities in the Lie algebraic formulations of the present energy dissipation and the double-bracket formulations mentioned in the Introduction and reviewed, for example, in \cite{BlMoRa2012}. Indeed, for general Lie algebras the double-bracket dissipation equations can be written as
\begin{equation}\label{Lie_algebraic_DB} 
\frac{d f( \mu )}{dt}= \{f,h\}_+( \mu )-\theta  \gamma ^\ast \left( \operatorname{ad}^*_{ \frac{\delta k}{\delta \mu }} \mu ,  \operatorname{ad}^*_{ \frac{\delta f}{\delta \mu }} \mu  \right) ,
\end{equation} 
(compare with equation \eqref{LP_formE} to see the differences) where $ \gamma $ is a inner product on $ \mathfrak{g}  $, $ \gamma ^\ast $ is the inner product induced on $ \mathfrak{g}  ^\ast $, and $k: \mathfrak{g}  ^\ast \rightarrow \mathbb{R}  $ is a given function. One readily checks that Casimirs are preserved while, in the special case $k=h$, the energy dissipates. In that case, the equation of motion arising from \eqref{Lie_algebraic_DB} is given by
\begin{equation}\label{DB-motioneqn} 
\partial _t \mu + \operatorname{ad}^*_ { \frac{\delta h}{\delta \mu }  } \mu 
= \theta \operatorname{ad}^*_{\big(  \operatorname{ad}^\ast_{ \frac{\delta h}{\delta \mu }} \mu\big) ^\sharp}  \mu  
\,,
\end{equation} 
where $\sharp: \mathfrak{g}  ^\ast \rightarrow \mathfrak{g}  $ is the sharp operator associated to $ \gamma $. See \cite{BKMR1996,HoPuTr2008} for discussions of double-bracket  dissipation of energy. Note that \eqref{Lie_algebraic_DB} with $h=k$ fits into the framework of \eqref{general_formulation}.


\subsubsection{Comparison with double-bracket dissipation}\label{comp_DB} 
Formula \eqref{DB-motioneqn} for double-bracket dissipation coincides in some particular cases with equations \eqref{LP_form-hdis-mu}, obtained from our approach after exchanging the functions $C$ and $h$ in \eqref{LP_form}. More precisely, this coincidence occurs in the special case of \textit{quadratic Lie algebras}, i.e., Lie algebras that admit an ad-invariant inner product $ \gamma $, for example, semisimple Lie algebras. In this special case, taking the quadratic Casimir $ C( \mu )= \frac{1}{2} \gamma ( \mu , \mu )$, our equation \eqref{LP_formE} and the double bracket equation \eqref{DB-motioneqn} with $k=h$, coincide. In that case, $(\operatorname{ad}^*_{\xi}  \mu)^\sharp = -\,\operatorname{ad}_{\xi} \mu^\sharp$ and equation \eqref{DB-motioneqn} takes the double-bracket form,
\begin{equation}\label{DB-motioneqn-ss} 
\partial _t \mu^\sharp - \operatorname{ad}_ { \frac{\delta h}{\delta \mu }  } \mu^\sharp 
=  \theta \operatorname{ad}_{\big(  \operatorname{ad}_{ \frac{\delta h}{\delta \mu }} \mu^\sharp\big)}  \mu^\sharp  
=  \theta \left[\left[ \frac{\delta h}{\delta \mu }\,,\, \mu^\sharp\right] ,\, \mu^\sharp  \right]
\,.
\end{equation} 
This discussion generalizes to $ \mu $-dependent ad-invariant inner products $ \gamma _\mu $.

\paragraph{Examples.}
Examples of double-bracket formulations in fluid dynamics include \cite{VaCaYo1989,Shep1990}, in which a modification of the transport velocity was used to impose energy dissipation with fixed Casimirs. The Lie algebraic nature of this modification of the transport velocity becomes clear by rewriting the special case \eqref{DB-motioneqn} of the double bracket motion equation \eqref{Lie_algebraic_DB} as 
\begin{equation}\label{DB-motioneqn-rev} 
\partial _t \mu + \operatorname{ad}^*_{ v} \mu 
= 0 
\quad\hbox{with transport velocity}\quad
v =  \frac{\delta h}{\delta \mu }  
- 
 \theta {\big(  \operatorname{ad}^\ast_{ \frac{\delta h}{\delta \mu }} \mu\big) ^\sharp}
\,.
\end{equation}

\subsubsection{Comparison with the triple bracket formalism}Let us consider the following general form of a triple bracket on a manifold $P$
\[
\{f,g,h\}= \mathcal{C} ( \mathbf{d} f, \mathbf{d} g, \mathbf{d} h),\quad f,g,h \in C^\infty(P),
\]
where $ \mathcal{C}$ is an antisymmetric $3$-contravariant tensor field on $P$. Such a triple bracket can be used to formulate the following energy-dissipative system on the Poisson manifold $(P,\{\,,\})$
\begin{equation}\label{general_triple_bracket} 
\dot f=\{f,h\}-((f,h)), \quad \text{for all $f \in C^\infty(P)$} \quad \text{with} \quad ((f,g)):= \gamma \left(\mathcal{C} (\_\,, \mathbf{d} C, \mathbf{d} f) ,\mathcal{C} (\_\,, \mathbf{d} C, \mathbf{d} g) \right),
\end{equation}
Here, the expression $\mathcal{C} (\_\,, \mathbf{d} C, \mathbf{d} f)$ denotes the vector (i.e., linear form on $T^*P$) defined by
\[
\alpha \in T^*P\mapsto \mathcal{C} (\alpha , \mathbf{d} C, \mathbf{d} f)\in \mathbb{R} 
\]
For example, in local coordinates $\mathcal{C} (\_\,, \mathbf{d} C, \mathbf{d} f)$ is the vector
$\mathcal{C}^{ijk}\partial _j C \partial _k f$
and the energy-dissipative bracket becomes
\[
((f,g)):= \gamma \left(\mathcal{C} (\_\,, \mathbf{d} C, \mathbf{d} f) ,\mathcal{C} (\_\, \mathbf{d} C, \mathbf{d} g) \right)= \gamma _{il}\mathcal{C}^{ijk}\partial _j C \partial _k f\mathcal{C}^{lmn}\partial _m C \partial _n g
\,,
\]
where $ \gamma $ is a (possibly degenerate) metric.
In the particular  case when $P$ is a quadratic Lie algebra $\mathfrak{g}$ with ad-invariant inner product $ \gamma$, then $ \mathcal{C} ( \xi , \eta , \zeta ):= \gamma ( \xi , [ \eta , \zeta ])$ is antisymmetric. Identifying $ \mathfrak{g}  ^\ast $ with $ \mathfrak{g}  $ with the help of $ \gamma $, the triple bracket reads
\[
\{f,g,h\}= \gamma( \nabla f, [ \nabla g, \nabla h]),
\]
where $ \nabla f= ( \mathbf{d} f)^\sharp$ is the gradient of $f$ relative to $ \gamma $. This is the Lie algebraic generalization of the Nambu bracket (\cite{Nambu1973}) given in \cite{BBMo1991}. In this case, we have
\[
((f,h))= \gamma ([ \nabla C, \nabla f], [ \nabla C, \nabla h]),
\]
which coincides with the dissipation term in our system \eqref{LP_formE}, in the particular case of quadratic Lie algebras. This remark generalizes easily to a $\mu $-dependent ad-invariant inner product $ \gamma _\mu $. 
In general, however, our system \eqref{LP_formE} is not a special case of a triple bracket construction \eqref{general_triple_bracket}. 

Note that the triple bracket formulation can also be used to formulate a metriplectic dynamics
\begin{equation}\label{Casimir_triple_bracket}
\dot f=\{f,h\}+(f,C),\quad \text{for all $f \in C^\infty(P)$} 
\quad \text{with} \quad 
(f,g):=- \gamma \left(\mathcal{C} (\_\,, \mathbf{d} h, \mathbf{d} f) ,\mathcal{C} (\_\,, \mathbf{d} h, \mathbf{d} g) \right).
\end{equation} 
This construction was made in \cite[\S4.2]{BlMoRa2012} for the special case when $P$ is a quadratic Lie algebra $ \mathfrak{g}$ with ad-invariant inner product $ \gamma $ and with $ \mathcal{C} ( \xi , \eta , \zeta ):= \gamma ( \xi , [ \eta , \zeta ])$. In this particular case we have
\[
(f,g)= - \gamma ([ \nabla h, \nabla f], [ \nabla h, \nabla g])
\]
and \eqref{Casimir_triple_bracket} coincides with our equation \eqref{LP_form}.  However, this coincidence does not hold in general.

\color{black}
 
\paragraph{Outlook.}
In the remainder of this paper, we will first concentrate on Casimir dissipation at constant energy, and then treat the opposite case of energy dissipation at a constant Casimir by simply switching $h$ and $C$
 as in passing from equation \eqref{LP_form} to equation \eqref{LP_formE}. After this switch, we can reduce further to the double bracket form seen previously in the literature for the case of a quadratic Casimir and an $\operatorname{Ad}$-invariant inner product on the Lie algebra. To illustrate the method, the explicit formulas for selective energy decay at fixed values of the Casimir will be discussed in detail for MHD and compared with historical treatments of the selective decay hypothesis for MHD, such as \cite{BrCaPe1999}.

\subsubsection{Lagrange-d'Alembert variational principle} \label{LdA-princ}
Equations \eqref{Casimir_dissipation} and  \eqref{LP_form-hdis-mu} provide the
constraint forces that will guide the ideal MHD system into a particular class of
equilibria, by decreasing, respectively, either a particular choice of Casimir at
constant energy, or vice versa. The balance between the Casimir and energy that
occurs at a critical point of their sum determines the class of equilibria that is
achievable by a given choice of constraint force. The existence of a constraint
force that will dynamically guide an MHD system into a certain class of equilibria
(or preserve it once it has been obtained) may be useful in the design and control
of magnetic confinement devices. 

The Lagrange-d'Alembert variational principle extends Hamilton's principle to the
case of forced systems, including nonholonomically constrained systems
(\cite{Bl2004}). We now explain following \cite{FGBHo2012} how the
Casimir-dissipative LP equations \eqref{Casimir_dissipation} and energy-dissipative
LP equations \eqref{LP_form-hdis-mu} can be obtained from the Lagrange-d'Alembert
principle. Consider the Lagrangian $\ell: \mathfrak{g}  \rightarrow \mathbb{R}  $ related to $h$ via the Legendre transform, that is, we have
\[
h( \mu )= \left\langle \mu , \xi \right\rangle - \ell( \xi ), \quad  \mu := \frac{\delta \ell}{\delta \xi},
\]
where we have assumed that the second relation yields a bijective correspondence between $ \xi $ and $ \mu $.
In terms of  $\ell$, equation \eqref{Casimir_dissipation} for Casimir dissipation reads
\begin{equation}\label{EP_Casimir} 
\partial _t \mu + \operatorname{ad}^*_ { \xi  } \mu 
= \theta \operatorname{ad}^*_ { \xi  } 
 \left[ \frac{\delta C}{\delta \mu }, \xi \right] ^\flat  
\,,\quad \mu := \frac{\delta \ell}{\delta \xi}. 
\end{equation}
These equations can be obtained by applying the Lagrange-d'Alembert variational principle
\[
\delta\left[  \int_0 ^T \ell( \xi ) dt\right] + \theta \int_0 ^T \gamma \left( \left[ \frac{\delta C}{\delta \mu }, \xi \right] , [ \xi ,\zeta] \right) dt  =0, \quad\text{for variations} \quad\delta \xi = \partial _t \zeta -[ \xi , \zeta ],
\]
where $ \zeta \in \mathfrak{g}  $ is an arbitrary curve vanishing at $t=0,T$.
Thus, in the Lagrange-d'Alembert formulation, the modification of the motion equation to impose selective decay of the Casimir is seen as an energy-conserving constraint force.

\begin{remark}\rm
Similarly, the energy-dissipative LP equation \eqref{LP_form-hdis-mu} admits the variational formulation
\[
\delta\left[  \int_0 ^T \ell( \xi ) dt\right] + \theta \int_0 ^T \gamma \left( \left[ \xi , \frac{\delta C}{\delta \mu }\right] , \left [ \frac{\delta C}{\delta \mu },\zeta\right ] \right) dt  =0, \quad\text{for variations} \quad\delta \xi = \partial _t \zeta -[ \xi , \zeta ],
\]
where $ \zeta \in \mathfrak{g}  $ is an arbitrary curve vanishing at $t=0,T$.
\end{remark}

\subsubsection{Kelvin-Noether theorem} \label{Kelvin-thm} 
The well-known Kelvin circulation theorems for fluids can be seen as reformulations of Noether's theorem and, therefore, they have an abstract Lie algebraic formulation (the Kelvin-Noether theorems), see \cite{HMR1998}. We now discuss the abstract Kelvin circulation theorem for Casimir-dissipative LP equation \eqref{Casimir_dissipation}.   

In order to formulate the Kelvin-Noether theorem, one has to choose a manifold $ \mathcal{C} $ on which the group $G$ acts on the left and consider a $G$-equivariant map $ \mathcal{K} : \mathcal{C} \rightarrow \mathfrak{g}  ^{**}$, i.e. $\left\langle  \mathcal{K} (gc),\operatorname{Ad}_{g^{-1} } ^\ast   \nu  \right\rangle =\left\langle \mathcal{K} (c ), \nu  \right\rangle, \forall\; g \in G$. Here $gc$ denotes the action of $g \in G$ on $ c \in \mathcal{C}$ and $ \operatorname{Ad}^*_g$ denotes the coadjoint action defined by $\left\langle \operatorname{Ad}^*_g \mu , \xi \right\rangle = \left\langle \mu \operatorname{Ad}_g \xi \right\rangle $, where $ \mu \in \mathfrak{g}  ^\ast $, $ \xi \in \mathfrak{g}  $, and $\operatorname{Ad}_g$ is the adjoint action of $G$ on $ \mathfrak{g}  $. Given $c \in \mathcal{C}$ and $ \mu \in \mathfrak{g}  ^\ast $, we will refer to $ \left\langle \mathcal{K} (c), \mu \right\rangle $ as the \textit{Kelvin-Noether quantity} (\cite{HMR1998}). In application to fluids, $ \mathcal{C} $ is the space of loops in the fluid domain and $ \mathcal{K} $ is the circulation around this loop, namely
\[
\left\langle \mathcal{K} (c), \mathbf{u} \cdot d \mathbf{x} \right\rangle :=\oint_ c \mathbf{u} \cdot d\mathbf{x}.
\]

The Kelvin-Noether theorem for Casimir-dissipative LP equations is formulated as follows. 

\begin{proposition}\label{KN_casimir}  \rm
Fix $ c _0 \in \mathcal{C} $ and consider a solution $ \mu (t)$ of the Casimir-dissipative LP equation \eqref{Casimir_dissipation}. Let $g(t) \in G$ be the curve determined by the equation $ \frac{\delta h}{\delta \mu }= \dot g g ^{-1} $, $g(0)=e$. Then the time derivative of the Kelvin-Noether quantity $\left\langle \mathcal{K} ( g (t) c _0 ), \mu (t) \right\rangle$ associated to this solution is
\[
\frac{d}{dt} \left\langle \mathcal{K} ( g (t) c _0 ), \mu (t) \right\rangle =\theta \left\langle \mathcal{K} (g(t) c _0 ), \operatorname{ad}^*_ { \frac{\delta h}{\delta \mu }  } 
 \left[ \frac{\delta C}{\delta \mu }, \frac{\delta h}{\delta \mu }\right] ^\flat\right\rangle.
\]
\end{proposition}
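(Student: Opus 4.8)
The plan is to follow the standard route for Kelvin--Noether theorems (as in \cite{HMR1998}): first use the $G$-equivariance of $\mathcal{K}$ to transfer the entire time dependence onto the coadjoint variable, reducing the problem to differentiating a single pairing against a fixed element of $\mathfrak{g}^{**}$, and then feed in the equation of motion. Concretely, I would begin by applying the equivariance relation $\langle \mathcal{K}(gc), \operatorname{Ad}^*_{g^{-1}}\nu\rangle = \langle \mathcal{K}(c), \nu\rangle$ with $c = c_0$, $g = g(t)$, and $\nu = \operatorname{Ad}^*_{g(t)}\mu(t)$. Since $\operatorname{Ad}^*_{g^{-1}}\operatorname{Ad}^*_g = \operatorname{id}$, this recasts the Kelvin--Noether quantity as
\[
\left\langle \mathcal{K}(g(t)c_0), \mu(t)\right\rangle = \left\langle \mathcal{K}(c_0), \operatorname{Ad}^*_{g(t)}\mu(t)\right\rangle,
\]
in which $\mathcal{K}(c_0)\in\mathfrak{g}^{**}$ is now time-independent, so differentiation only hits $\operatorname{Ad}^*_{g(t)}\mu(t)$.

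The crux of the argument is the differentiation of $\operatorname{Ad}^*_{g(t)}\mu(t)$. Writing $\xi:=\delta h/\delta\mu = \dot g g^{-1}$ and using the defining relations of $\operatorname{Ad}$ and $\operatorname{ad}^*$, I would establish the identity
\[
\frac{d}{dt}\big(\operatorname{Ad}^*_{g(t)}\mu(t)\big) = \operatorname{Ad}^*_{g(t)}\big(\partial_t\mu + \operatorname{ad}^*_{\xi}\mu\big).
\]
This is the step where care with conventions is essential, and I expect it to be the main obstacle, in the sense that it is the only place where genuine computation and bookkeeping enter; everything else is formal. Because the Lie--Poisson structure used here is the right-invariant one, the relevant velocity is $\dot g g^{-1}$ rather than $g^{-1}\dot g$, and one must track the resulting sign in $\frac{d}{dt}\operatorname{Ad}_{g(t)}\eta = \operatorname{ad}_\xi\operatorname{Ad}_{g(t)}\eta$; pairing against $\mu$ and using $\langle\operatorname{ad}^*_\xi\mu,\eta\rangle = \langle\mu,\operatorname{ad}_\xi\eta\rangle$ produces the displayed formula.

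With this identity in hand, I would substitute the Casimir-dissipative LP equation \eqref{Casimir_dissipation}, which states precisely that $\partial_t\mu + \operatorname{ad}^*_{\xi}\mu = \theta\,\operatorname{ad}^*_{\xi}[\delta C/\delta\mu, \delta h/\delta\mu]^\flat$ with $\xi=\delta h/\delta\mu$, to obtain
\[
\frac{d}{dt}\left\langle \mathcal{K}(c_0), \operatorname{Ad}^*_{g(t)}\mu(t)\right\rangle = \theta\,\left\langle \mathcal{K}(c_0), \operatorname{Ad}^*_{g(t)}\operatorname{ad}^*_{\xi}\big[\tfrac{\delta C}{\delta\mu}, \tfrac{\delta h}{\delta\mu}\big]^\flat\right\rangle.
\]
I would finish by applying the equivariance relation in the opposite direction, namely $\langle\mathcal{K}(c_0),\operatorname{Ad}^*_{g(t)}\zeta\rangle = \langle\mathcal{K}(g(t)c_0),\zeta\rangle$ (obtained by taking $\nu=\operatorname{Ad}^*_{g(t)}\zeta$ in the equivariance identity), applied to $\zeta = \theta\,\operatorname{ad}^*_{\xi}[\delta C/\delta\mu, \delta h/\delta\mu]^\flat$. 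This pulls $\operatorname{Ad}^*_{g(t)}$ back through $\mathcal{K}$ and yields exactly the stated expression. The structural point making the whole computation close is that the dissipative source term in \eqref{Casimir_dissipation} carries the same $\operatorname{ad}^*_{\delta h/\delta\mu}$ prefactor as the ideal coadjoint transport term, so it threads through the equivariance argument in precisely the same way; in the ideal limit $\theta=0$ the right-hand side vanishes and one recovers the classical conservation of the Kelvin--Noether quantity.
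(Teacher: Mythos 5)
Your argument is correct and is exactly the standard Kelvin--Noether computation: equivariance of $\mathcal{K}$ reduces the derivative to $\operatorname{Ad}^*_{g(t)}\bigl(\partial_t\mu+\operatorname{ad}^*_{\xi}\mu\bigr)$ with $\xi=\dot g g^{-1}$, and substituting \eqref{Casimir_dissipation} and pulling $\operatorname{Ad}^*_{g(t)}$ back through $\mathcal{K}$ gives the stated formula, with the sign conventions for the right-invariant case handled correctly. The paper states the proposition without a written proof, deferring to the framework of \cite{HMR1998}; your derivation is precisely the one that framework supplies, so there is nothing to compare beyond noting agreement.
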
 

Note that $g(t)\in G$ is the motion in Lagrangian coordinates associated to the evolution of the momentum $ \mu (t)\in \mathfrak{g}  ^\ast $ in Eulerian coordinates. The $\theta$ term is an extra source of circulation with a double commutator. This term is absent in the ordinary Lie--Poisson case (i.e., for $ \theta =0$) and therefore in this case the Kelvin-Noether quantity $\left\langle \mathcal{K} ( g (t) c _0 ), \mu (t) \right\rangle$ is conserved along solutions.

\begin{corollary}\label{KN_erg}\rm
In the case of the energy-dissipative LP equation \eqref{LP_form-hdis-mu}, the Kelvin-Noether theorem is found from the exchange $C\leftrightarrow h$ to be
\[
\frac{d}{dt} \left\langle \mathcal{K} ( g (t) c _0 ), \mu (t) \right\rangle =\theta \left\langle \mathcal{K} (g(t) c _0 ), \operatorname{ad}^*_ { \frac{\delta C}{\delta \mu }  } 
 \left[ \frac{\delta h}{\delta \mu }, \frac{\delta C}{\delta \mu }\right] ^\flat\right\rangle.
\]
\end{corollary}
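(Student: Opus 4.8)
The plan is to follow the standard route for Kelvin--Noether theorems (as in \cite{HMR1998}): reduce everything to the single scalar quantity $I(t):=\left\langle \mathcal{K}(g(t)c_0),\mu(t)\right\rangle$ and exploit the $G$-equivariance of $\mathcal{K}$ twice, once to ``freeze'' the loop so that differentiation becomes elementary, and once at the end to restore it. Throughout I would write $\xi:=\frac{\delta h}{\delta\mu}=\dot g g^{-1}$ for the right-invariant reconstruction velocity, and recall from the definition $\left\langle \operatorname{Ad}^*_g\mu,\eta\right\rangle=\left\langle \mu,\operatorname{Ad}_g\eta\right\rangle$ that $\operatorname{Ad}^*$ is a right action, i.e. $\operatorname{Ad}^*_{gh}=\operatorname{Ad}^*_h\operatorname{Ad}^*_g$, so that $\operatorname{Ad}^*_{g^{-1}}\operatorname{Ad}^*_g=\mathrm{id}$.

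First I would use equivariance to pull the group element out of $\mathcal{K}$. Substituting $\nu=\operatorname{Ad}^*_{g(t)}\mu(t)$ into the equivariance identity $\left\langle \mathcal{K}(gc),\operatorname{Ad}^*_{g^{-1}}\nu\right\rangle=\left\langle \mathcal{K}(c),\nu\right\rangle$ with $g=g(t)$ and $c=c_0$, and using $\operatorname{Ad}^*_{g^{-1}}\operatorname{Ad}^*_g=\mathrm{id}$, gives $I(t)=\left\langle \mathcal{K}(c_0),\operatorname{Ad}^*_{g(t)}\mu(t)\right\rangle$. The point of this rewriting is that $\mathcal{K}(c_0)\in\mathfrak{g}^{**}$ is now a fixed element, so computing $\frac{d}{dt}I$ reduces to differentiating the single curve $t\mapsto\operatorname{Ad}^*_{g(t)}\mu(t)$ in $\mathfrak{g}^*$.

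The key step is the basic identity for the coadjoint time-derivative, namely $\frac{d}{dt}\big(\operatorname{Ad}^*_{g(t)}\mu(t)\big)=\operatorname{Ad}^*_{g(t)}\big(\dot\mu+\operatorname{ad}^*_\xi\mu\big)$ with $\xi=\dot g g^{-1}$. I would establish this by pairing against a fixed $\eta\in\mathfrak{g}$, moving the time-derivative onto $\operatorname{Ad}_{g(t)}\eta$, and using $\frac{d}{dt}\operatorname{Ad}_{g(t)}\eta=\operatorname{ad}_\xi\operatorname{Ad}_{g(t)}\eta$ together with $\left\langle \operatorname{ad}^*_\xi\mu,\cdot\right\rangle=\left\langle \mu,[\xi,\cdot]\right\rangle$. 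Now the Casimir-dissipative LP equation \eqref{Casimir_dissipation} states precisely that $\dot\mu+\operatorname{ad}^*_\xi\mu=\theta\operatorname{ad}^*_{\frac{\delta h}{\delta\mu}}\big[\frac{\delta C}{\delta\mu},\frac{\delta h}{\delta\mu}\big]^\flat$, so the identity becomes $\frac{d}{dt}\big(\operatorname{Ad}^*_{g(t)}\mu(t)\big)=\operatorname{Ad}^*_{g(t)}\rho$ with $\rho:=\theta\operatorname{ad}^*_{\frac{\delta h}{\delta\mu}}\big[\frac{\delta C}{\delta\mu},\frac{\delta h}{\delta\mu}\big]^\flat$.

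Finally I would restore the loop: applying the equivariance identity a second time with $\nu=\operatorname{Ad}^*_{g(t)}\rho$ yields $\left\langle \mathcal{K}(c_0),\operatorname{Ad}^*_{g(t)}\rho\right\rangle=\left\langle \mathcal{K}(g(t)c_0),\rho\right\rangle$, which is exactly the asserted formula. The main obstacle, as is typical for these arguments, is not any deep difficulty but the bookkeeping of conventions: one must consistently use the right-invariant reconstruction $\xi=\dot g g^{-1}$, the corresponding sign in $\frac{d}{dt}\operatorname{Ad}_{g(t)}=\operatorname{ad}_\xi\operatorname{Ad}_{g(t)}$, and the anti-homomorphism property of $\operatorname{Ad}^*$; a sign error in any of these would flip the $\theta$-term, while everything else is routine. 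Since the argument never used any special structure of the right-hand side of \eqref{Casimir_dissipation}, Corollary \ref{KN_erg} follows verbatim by repeating the computation with equation \eqref{LP_form-hdis-mu} in place of \eqref{Casimir_dissipation}, i.e. under the exchange $C\leftrightarrow h$ in the $\theta$-term.
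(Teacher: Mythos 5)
Your proposal is correct and follows exactly the route the paper intends: the standard Kelvin--Noether computation of \cite{HMR1998} (equivariance to freeze the loop, the identity $\frac{d}{dt}\operatorname{Ad}^*_{g(t)}\mu = \operatorname{Ad}^*_{g(t)}(\dot\mu + \operatorname{ad}^*_{\dot g g^{-1}}\mu)$, substitution of the modified LP equation, equivariance again), followed by the observation that nothing in the argument depends on the form of the $\theta$-term, so the energy-dissipative case is obtained verbatim by the exchange $C\leftrightarrow h$. The sign bookkeeping is also consistent, since $-\theta\operatorname{ad}^*_{\delta C/\delta\mu}\bigl[\frac{\delta C}{\delta\mu},\frac{\delta h}{\delta\mu}\bigr]^\flat = \theta\operatorname{ad}^*_{\delta C/\delta\mu}\bigl[\frac{\delta h}{\delta\mu},\frac{\delta C}{\delta\mu}\bigr]^\flat$ by linearity of $\flat$, matching the stated formula.
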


\subsection{Example: the rigid body}

The Lie--Poisson bracket on the dual Lie algebra of $\mathfrak{so}(3)$ may be written on $\mathbb{R}^3$ as 
\begin{equation}
\{F,h\}_-(\boldsymbol{\Pi} ) 
= -\,\boldsymbol{\Pi}\cdot \frac{\delta F}{\delta \boldsymbol{\Pi} }
\times \frac{\delta h}{\delta \boldsymbol{\Pi} }
\,,
\label{so3LPB}
\end{equation}
with $ \boldsymbol{\Pi}\in\mathbb{R}^3$.
The corresponding Lie--Poisson motion equation is, for the left-invariant case,
\[
\frac{d}{dt}\boldsymbol{\Pi} - \boldsymbol{\Pi} \times \frac{\delta h}{\delta \boldsymbol{\Pi} } =0\,.
\]
This equation describes the motion of a rigid body with Hamiltonian $h( \boldsymbol{\Pi}  )=\frac12\boldsymbol{\Pi}\cdot {\rm I}^{-1}\boldsymbol{\Pi}$ and symmetric positive-definite moment of inertia tensor ${\rm I}$ whose principle moments are assumed to be ordered as $I_1>I_2>I_3$.
The Casimir for this Lie--Poisson bracket is $C(\boldsymbol{\Pi} )=\frac{1}{2}  | \boldsymbol{\Pi} | ^2$ with $\frac{\delta C}{\delta \boldsymbol{\Pi} } = \boldsymbol{\Pi}$, and one may check that the Lie-Poisson bracket \eqref{so3LPB} yields $\{C,h\}=0$ for any Hamiltonian $h$.

\paragraph{Selective Casimir decay for the rigid body.}
The modified momentum induced by the principle of selective decay of Casimirs is found from equation \eqref{mod-mom-left} in this case to be
\[
\boldsymbol{\widetilde{\Pi}}
= \boldsymbol{\Pi} + \theta \left( \frac{\delta C}{\delta \boldsymbol{\Pi} } \times \frac{\delta h}{\delta \boldsymbol{\Pi} }  \right) ^\flat 
= \boldsymbol{\Pi} + \theta \left( \boldsymbol{\Pi} \times \frac{\delta h}{\delta \boldsymbol{\Pi} }  \right) ^\flat .
\]
The angular velocity of the rigid body is given by $ \frac{\delta h}{\delta \boldsymbol{\Pi} }= \boldsymbol{\Omega} ={\rm I}^{-1}\boldsymbol{\Pi}$. Choosing the usual inner product on $ \mathbb{R}  ^3 $ for the bilinear form $ \gamma_{\boldsymbol{\Pi}}  $ yields $ \flat =Id$, which implies from equation \eqref{anticipated-ad_motion_left} that
\begin{equation}\label{RB-CasDecayRate} 
\frac{d}{dt} \boldsymbol{\Pi}  - \boldsymbol{\Pi} \times \boldsymbol{\Omega} = \theta ( \boldsymbol{\Pi} \times \boldsymbol{\Omega} ) \times \boldsymbol{\Omega}
 , \quad \text{so that} \quad  
\frac{d}{dt} \frac{1}{2} \boldsymbol{\Pi} ^2 =- \theta | \boldsymbol{\Omega} \times \boldsymbol{\Pi} | ^2 \le0 \,.
\end{equation}
One might also have chosen the inner product $\gamma_{\rm I}$ associated with the inertia tensor ${\rm I}$, in which case
\[
\frac{d}{dt} \boldsymbol{\Pi}  - \boldsymbol{\Pi} \times \boldsymbol{\Omega} = \theta {\rm I}( \boldsymbol{\Pi} \times \boldsymbol{\Omega} ) \times \boldsymbol{\Omega} , \quad \frac{d}{dt} \frac{1}{2} \boldsymbol{\Pi} ^2 
=- \theta {\rm I} (\boldsymbol{\Omega} \times \boldsymbol{\Pi} ) \cdot (\boldsymbol{\Omega} \times \boldsymbol{\Pi})\le0\,.
\]

\paragraph{Selective energy decay  for the rigid body.}
Upon choosing instead to dissipate the energy at a fixed value of the Casimir and taking the usual inner product on $ \mathbb{R}  ^3 $ for the bilinear form $ \gamma_{\boldsymbol{\Pi}}  $ so that $ \flat =Id$,  the modified Lie--Poisson motion equation for selective decay of energy is found from equation \eqref{LP_form-hdis-mu}, written in the left-invariant case. Thus,  for the rigid body Hamiltonian $h( \boldsymbol{\Pi}  )=\frac12\boldsymbol{\Pi}\cdot {\rm I}^{-1}\boldsymbol{\Pi}$ and
Casimir $C(\boldsymbol{\Pi} )=\frac{1}{2}  | \boldsymbol{\Pi} | ^2 $, this becomes
\begin{equation}\label{RB-flow-erg-decay1} 
\frac{d}{dt} \boldsymbol{\Pi}  + \boldsymbol{\Omega} \times \boldsymbol{\Pi} 
= \theta\,  \boldsymbol{\Pi} \times ( \boldsymbol{\Pi} \times \boldsymbol{\Omega} ) 
\,,
\end{equation} 
which is the Landau-Lifshitz equation for spatially homogeneous dynamics of magnetization $( \boldsymbol{\Pi})$ at the microscopic scale \cite{LL1935}. Consequently, for this choice of the inner product given by the bilinear form $ \gamma_{\boldsymbol{\Pi}}  $ the rigid body energy decays as
\[
\frac{d}{dt} \left(\frac{1}{2} \boldsymbol{\Omega}\cdot \boldsymbol{\Pi}\right)
= \boldsymbol{\Omega}\cdot \frac{d}{dt}\boldsymbol{\Pi}
 =- \theta | \boldsymbol{\Omega} \times \boldsymbol{\Pi} | ^2 \le0 \,.
\]
\begin{remark}\rm
By the exchange symmetry of the dynamics of \eqref{LP_form} and \eqref{LP_formE} under $h\leftrightarrow C$, the energy of the rigid body decays at constant Casimir at the same rate as its Casimir decays at constant energy. The decay of either the energy or the Casimir ends at an equilibrium of the rigid body flow, at which the angular frequency $\boldsymbol{\Omega}$ and the $\boldsymbol{\Pi}$ angular momentum are aligned, so that $\boldsymbol{\Omega} \times \boldsymbol{\Pi}=0$, as expected from applying the energy-Casimir stability method in the example of the rigid body flow, \cite{HMRW1984}.
\end{remark}

\paragraph{Figures for selective decay of rigid-body energy at constant Casimir.}
Equation \eqref{RB-flow-erg-decay1} governs energy decay of the rigid body flow while preserving the Casimir whose level set defines angular momentum spheres in $ \mathbb{R}  ^3 $. 
The basins of attractions for the two North (green) and South (blue) least energy states are shown in Figures \ref{Fig5a-span_traj} for two different values of $\theta$. Along the basin boundaries in these figures, a slight change in the initial conditions may result in approaches to diametrically opposite equilibrium states asymptotically in time.

\begin{figure}[h!]
\begin{center}
\includegraphics[width=0.5\textwidth]{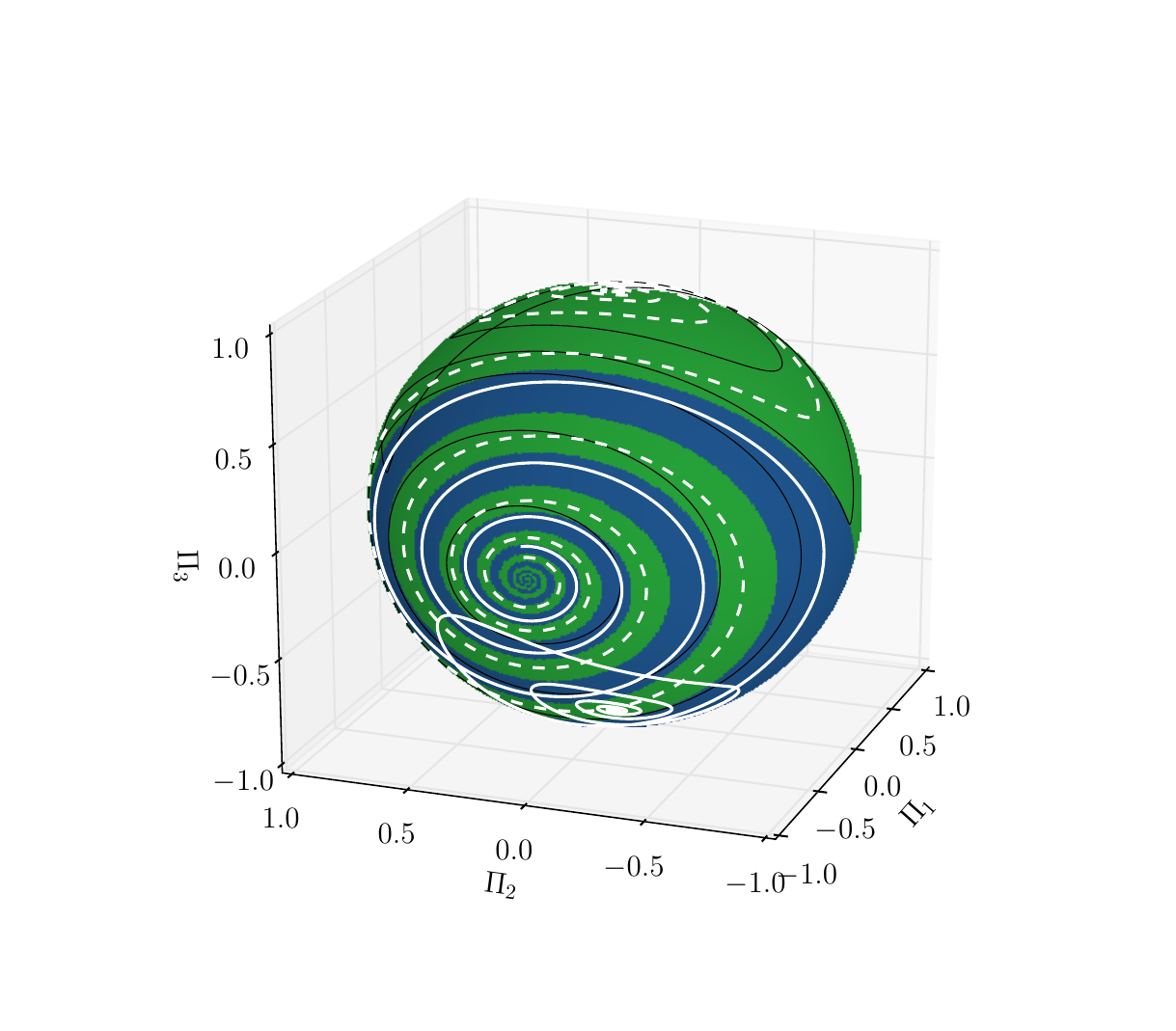}\!\!\!\!\!\!\includegraphics[width=0.5\textwidth]{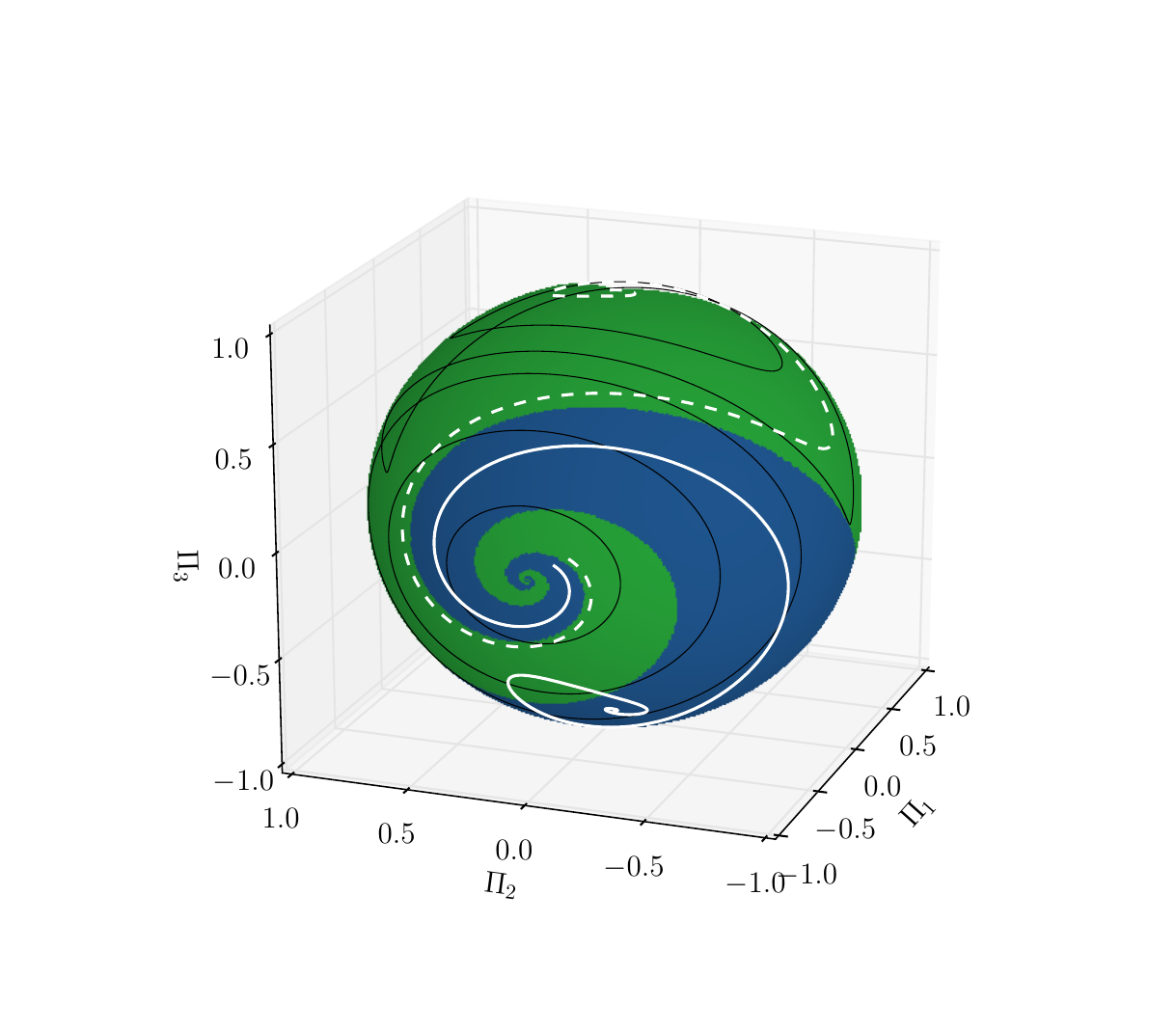}
\end{center}
\caption{\label{Fig5a-span_traj}\footnotesize 
\textit{Left}: For the solution curves of \eqref{RB-flow-erg-decay1} with $\theta=0.1$, this Figure shows the basins of attraction of the North (green) and South (blue) least energy states (of longest principle axis) lying at opposite points on the angular momentum sphere. Initial conditions starting in the blue (resp. red) region stay in the blue (resp. green) region. Along the basin boundaries, a slight change in the initial conditions may result in asymptotic approaches to diametrically opposite equilibrium states. \textit{Right}: For the solution curves of \eqref{RB-flow-erg-decay1} with $\theta=0.3$, this Figure shows the basins of attraction of the North (green) and South (blue) least energy states (of longest principle axis) lying at opposite points on the angular momentum sphere. Initial conditions starting in the blue (resp. green) region stay in the blue (resp. red) region. Along the basin boundaries, a slight change in the initial conditions may result in asymptotic approaches to diametrically opposite equilibrium states.}
\end{figure}




\section{Selective decay on semidirect products}\label{sec_SDP}
\subsection{Semidirect products}\label{subsec_SDP} 

The Hamiltonian structure of fluids that possess advected quantities such as heat, mass, buoyancy, magnetic field, etc., can be understood by using Lie--Poisson brackets for semidirect-product Lie algebras \cite{MaRaWe1984}.

In this setting, besides the Lie group configuration space $G$, one needs to include a vector space $V $ on which $G$ acts linearly. Its dual vector space $ V ^\ast $ contains the advected quantities. One then considers the semidirect product $G \,\circledS\, V$ with Lie algebra $ \mathfrak{g}  \,\circledS\, V$, and the Hamiltonian structure is given by the Lie--Poisson bracket \eqref{right_LP}, written on $ ( \mathfrak{g}  \,\circledS\, V) ^\ast $ instead of $ \mathfrak{g}  ^\ast $. We refer to \cite{MaRaWe1984}, \cite{HMR1998} for a detailed treatment. Given a Hamiltonian function $h=h( \mu ,a)$ with $h: ( \mathfrak{g}  \,\circledS\, V) ^\ast \to \mathbb{R}$ one thus obtains the Lie--Poisson equations
\begin{equation}\label{LP_SDP} 
\partial _t (\mu ,a)+ \operatorname{ad}^*_ { \left( \frac{\delta h}{\delta \mu } , \frac{\delta h}{\delta a} \right) }( \mu, a)=0,
\end{equation} 
for $\mu (t)\in \mathfrak{g}  ^\ast $ and $a (t)\in V ^\ast $. More explicitly, making use of the expression for the $ \operatorname{ad}^*$-operator in the semidirect product case, these equations read
\begin{equation}\label{SDP_LP} 
\partial _t \mu + \operatorname{ad}^*_{ \frac{\delta h}{\delta \mu }} \mu   +\frac{\delta h}{\delta a}\diamond a =0, \quad \partial _t a + a \frac{\delta h}{\delta \mu } =0,
\end{equation} 
where the operator $ \diamond : V \times V^\ast \rightarrow \mathfrak{g}  ^\ast $ is defined by 
\begin{equation}\label{diamond-def}
\left\langle v \diamond a, \xi \right\rangle := -\left\langle a \xi ,v\right\rangle
, \quad \text{for all $v \in V$, $a \in V ^\ast $, and $ \xi \in \mathfrak{g}  $}, 
\end{equation} 
and $ a \xi\in V ^\ast  $ denotes the (right) Lie algebra action of $ \xi \in \mathfrak{g}  $ on $ a \in V ^\ast $.

\paragraph{Casimir dissipation for semidirect products.} 
From the Lie algebraic point of view the direct generalization of \eqref{Casimir_dissipation} to semidirect product Lie groups would be
\begin{equation}\label{Casimir_dissipation_SDP} 
\partial _t (\mu ,a)+ \operatorname{ad}^*_ { \left( \frac{\delta h}{\delta \mu } , \frac{\delta h}{\delta a} \right) } (\mu ,a) 
= \theta \operatorname{ad}^*_ {\left(  \frac{\delta h}{\delta \mu } , \frac{\delta h}{\delta a} \right) }  \left[ \left( \frac{\delta C}{\delta \mu },\frac{\delta C}{\delta a }\right) , \left( \frac{\delta h}{\delta \mu },  \frac{\delta h}{\delta a } \right) \right] ^\flat,
\end{equation} 
where the flat operator $ \flat : \mathfrak{g}  \times V \rightarrow \mathfrak{g}  ^\ast \times V ^\ast $ is associated to a positive symmetric bilinear map $ \gamma_{( \mu , a)} :(\mathfrak{g}  \times V) \times ( \mathfrak{g}  \times V) \rightarrow \mathbb{R}$. Using the expression $[( \xi , v), ( \eta , w)]=([ \xi , \eta ], v \eta -w \xi )$
for the Lie bracket on $ \mathfrak{g}  \,\circledS\, V$, we can write \eqref{Casimir_dissipation_SDP} as
\begin{equation}
\label{Casimir_dissipation_SDP-system}
\partial _t (\mu ,a)+ \operatorname{ad}^*_ { \left( \frac{\delta h}{\delta \mu } , \frac{\delta h}{\delta a} \right) } (\widetilde{\mu},\widetilde a)=0,
\end{equation}
in which \emph{both} $ \mu $ and $ a$ are modified as
\begin{equation}\label{both_momenta} 
(\widetilde \mu , \widetilde a)= ( \mu , a)
+ 
\theta \left(\left[ \frac{\delta h}{\delta \mu }, \frac{\delta C}{\delta \mu }\right] , \frac{\delta h}{\delta a} \frac{\delta C}{\delta \mu }-\frac{\delta C}{\delta a} \frac{\delta h}{\delta \mu } \right) ^\flat
.
\end{equation} 
By using the formula $\operatorname{ad}^*_{( \xi , v)}( \mu , a)= (\operatorname{ad}^*_ \xi \mu + v \diamond a, a \xi)$ in equation \eqref{Casimir_dissipation_SDP-system}, one finds the explicit Casimir-dissipative system
\begin{equation}\label{Casimir_diss_SDP_explicit}
\partial _t \mu + \operatorname{ad}^*_{ \frac{\delta h}{\delta \mu }} \widetilde{\mu}   +\frac{\delta h}{\delta a}\diamond \widetilde{a}=0, \qquad \partial _t a+ \widetilde{a} \frac{\delta h}{\delta \mu }=0.
\end{equation}
When $ \gamma $ is diagonal on the Cartesian product $ \mathfrak{g}  \times V$, we can write $( \xi , v) ^\flat =( \xi ^\flat , v ^\flat )$ and \eqref{Casimir_diss_SDP_explicit} can be written explicitly as
\begin{equation}\label{explicit_SDP_Casimir}
\left\{ \begin{array}{l}
\displaystyle\vspace{0.2cm}\partial _t \mu + \operatorname{ad}^*_{ \frac{\delta h}{\delta \mu }} \mu  + \frac{\delta h}{\delta a}\diamond a
+ \theta \operatorname{ad}^*_{ \frac{\delta h}{\delta \mu }}
\left[ \frac{\delta h}{\delta \mu }, \frac{\delta C}{\delta \mu }\right] ^\flat
+ \theta\, \frac{\delta h}{\delta a}\diamond
\left(\frac{\delta h}{\delta a} \frac{\delta C}{\delta \mu }-\frac{\delta C}{\delta a} \frac{\delta h}{\delta \mu } \right) ^\flat=0\\
\displaystyle\partial _t a+ a \frac{\delta h}{\delta \mu } 
+ \theta 
\left(\frac{\delta h}{\delta a} \frac{\delta C}{\delta \mu }-\frac{\delta C}{\delta a} \frac{\delta h}{\delta \mu } \right) ^\flat
\frac{\delta h}{\delta \mu }=0
\,.
\end{array}\right.
\end{equation}
%
One may verify that the modified 
semidirect-product Lie--Poisson system \eqref{Casimir_dissipation_SDP} dissipates the Casimir $C$ while keeping energy conserved, under the modification of both $\mu$ and $a$. Namely, one computes the Lie--Poisson form
\begin{align}\label{Casimir_dissipation_SDP1} 
\begin{split}
\frac{df(\mu, a) }{dt} &=\left\{ f,h \right\} _+(\mu, a) 
-\theta \gamma  \left(\left[ \left( \frac{\delta f}{\delta \mu },\frac{\delta f}{\delta a }\right) , \left( \frac{\delta h}{\delta \mu },  \frac{\delta h}{\delta a } \right) \right],
\left[ \left( \frac{\delta C}{\delta \mu },\frac{\delta C}{\delta a }\right) , \left( \frac{\delta h}{\delta \mu },  \frac{\delta h}{\delta a } \right) \right] \right),
\end{split}
\end{align} 
which for $f=h$ shows that the energy is conserved while for $f=C$ shows that the Casimir dissipates.

\medskip

\begin{remark}[A simplification for $\frac{\delta C}{\delta \mu }=0$]\label{remark_simplification_Casimir} $\,$\rm
Note that the modification of $ \mu $ in the system \eqref{Casimir_diss_SDP_explicit} implies a modification of the $ \mu $-equation only. However, a modification of $a$ alone will yield a modification of both the $\mu$- and $a$-equations. For example, if $\frac{\delta C}{\delta \mu }=0$ then equation \eqref{both_momenta}  reduces to
\begin{equation}\label{Cmu0_momenta} 
\widetilde \mu = \mu, \qquad  \widetilde a=  a
-\,\theta \left( \frac{\delta C}{\delta a} \frac{\delta h}{\delta \mu } \right) ^\flat
,
\end{equation} 
and equation \eqref{explicit_SDP_Casimir} simplifies to 
\begin{equation}\label{Casimir_diss_SDP_explicit1}
\partial _t \mu + \operatorname{ad}^*_{ \frac{\delta h}{\delta \mu }} \mu  
+\frac{\delta h}{\delta a}\diamond a 
= \theta \frac{\delta h}{\delta a}\diamond\left( \frac{\delta C}{\delta a} \frac{\delta h}{\delta \mu } \right) ^\flat
, \quad \partial _t a + a \frac{\delta h}{\delta \mu }
= 
\theta \left( \frac{\delta C}{\delta a} \frac{\delta h}{\delta \mu } \right) ^\flat \frac{\delta h}{\delta \mu }.
\end{equation}
\end{remark}

\paragraph{Energy dissipation for semidirect products.} Exchanging the role of $h$ and $C$ in the $ \theta $-term of \eqref{Casimir_dissipation_SDP}, we get the energy-dissipative LP equation 
which preserves the Casimir $C$ for semidirect product Lie groups,  
\begin{equation}\label{energy_dissipation_SDP}
\partial _t (\mu ,a)+ \operatorname{ad}^*_ { \left( \frac{\delta h}{\delta \mu } , \frac{\delta h}{\delta a} \right) } (\mu ,a) =\theta \operatorname{ad}^*_ {\left(  \frac{\delta C}{\delta \mu } , \frac{\delta C}{\delta a} \right) }  
 \left(\left[ \frac{\delta h}{\delta \mu }, \frac{\delta C}{\delta \mu }\right] , \frac{\delta h}{\delta a} \frac{\delta C}{\delta \mu }-\frac{\delta C}{\delta a} \frac{\delta h}{\delta \mu } \right) ^\flat
\end{equation} 
In Lie--Poisson form this becomes 
\begin{equation}\label{energy_dissipation_SDP1} 
\frac{df(\mu, a) }{dt}=\left\{ f,h \right\} _+(\mu, a)
- \theta \,\gamma  \left(\left[ \left( \frac{\delta f}{\delta \mu },\frac{\delta f}{\delta a }\right) , \left( \frac{\delta C}{\delta \mu },  \frac{\delta C}{\delta a } \right) \right],
\left[ \left( \frac{\delta h}{\delta \mu },\frac{\delta h}{\delta a }\right) , \left( \frac{\delta C}{\delta \mu },  \frac{\delta C}{\delta a } \right) \right] \right),
\end{equation} 
which for $f=h$ shows that the energy dissipates as, 
\begin{equation}\label{energy_diss_SDP_gen}
\frac{dh(\mu, a) }{dt} =
-\,\theta \left\| \left[ \frac{\delta h}{\delta \mu }, \frac{\delta C}{\delta \mu }\right] \right\|_\gamma^2
-\,\theta \left\| 
\frac{\delta h}{\delta a} \frac{\delta C}{\delta \mu }-\frac{\delta C}{\delta a} \frac{\delta h}{\delta \mu }  \right\|_\gamma^2\,,
\end{equation} 
while for $f=C$ equation \eqref{energy_dissipation_SDP1} shows that the Casimir is conserved under the dynamics of \eqref{energy_dissipation_SDP1}.

After using the formula $ \operatorname{ad}^*_{( \xi , v)}( \mu , a)= (\operatorname{ad}^*_ \xi \mu + v \diamond a, a \xi)$ for the coadjoint operator of the semidirect product $ \mathfrak{g}  \,\circledS\, V$, and assuming that $ \gamma $ is diagonal on the Cartesian product $ \mathfrak{g}  \times V$, the system \eqref{energy_dissipation_SDP} is explicitly given by
\begin{equation}\label{energy_diss_SDP_explicit}
\left\{ \begin{array}{l}
\displaystyle\vspace{0.2cm}\partial _t \mu + \operatorname{ad}^*_{ \frac{\delta h}{\delta \mu }} \mu  + \frac{\delta h}{\delta a}\diamond a
+ \theta \operatorname{ad}^*_{ \frac{\delta C}{\delta \mu }}
\left[ \frac{\delta h}{\delta \mu }, \frac{\delta C}{\delta \mu }\right] ^\flat
+ \theta\, \frac{\delta C}{\delta a}\diamond
\left(\frac{\delta h}{\delta a} \frac{\delta C}{\delta \mu }-\frac{\delta C}{\delta a} \frac{\delta h}{\delta \mu } \right) ^\flat=0\\
\displaystyle\partial _t a+ a \frac{\delta h}{\delta \mu } 
+ \theta 
\left(\frac{\delta h}{\delta a} \frac{\delta C}{\delta \mu }-\frac{\delta C}{\delta a} \frac{\delta h}{\delta \mu } \right) ^\flat
\frac{\delta C}{\delta \mu }=0
\,.
\end{array}\right.
\end{equation}

\begin{remark}[Simplifications for $\frac{\delta C}{\delta \mu }=0$]\rm When $ \frac{\delta C}{\delta \mu }=0$, the energy-dissipative system \eqref{energy_diss_SDP_explicit} simplifies to 
\begin{equation}\label{energy_diss_SDP_explicit-noCmu}
\partial _t \mu + \operatorname{ad}^*_{ \frac{\delta h}{\delta \mu }} \mu  + \frac{\delta h}{\delta a}\diamond a
= \theta \,\frac{\delta C}{\delta a}\diamond
\left(\frac{\delta C}{\delta a} \frac{\delta h}{\delta \mu } \right) ^\flat
\,, \quad 
\partial _t a+ a \frac{\delta h}{\delta \mu } 
=0\,,
\end{equation}  
cf. equation \eqref{Casimir_diss_SDP_explicit1} for the corresponding simplification in the Casimir-dissipative case. Note that contrary to the Casimir dissipative case (Remark \ref{remark_simplification_Casimir}), the advection equation is left unchanged. The LP form of \eqref{energy_diss_SDP_explicit-noCmu} may be obtained by substitution, to find
\begin{equation}\label{erg-dis1}
\frac{df}{dt} = \left\langle \frac{\delta f}{\delta \mu } \,,\,\partial _t \mu \right\rangle
+ \left\langle \frac{\delta f}{\delta a } \,,\,\partial _t a \right\rangle=
\left\{ f,h \right\} _+(\mu, a)
- \theta\,\gamma \left(   
\frac{\delta C}{\delta a} \frac{\delta h}{\delta \mu }    
 \,,\, \frac{\delta C}{\delta a}\frac{\delta f}{\delta \mu }
\right).
\end{equation} 
Setting $f=h$ in the final equation of \eqref{erg-dis1} gives the energy dissipation equation
\begin{equation}\label{erg-dis2}
\frac{dh( \mu , a)}{dt}  = -\,\theta\,\gamma \left(   
\frac{\delta C}{\delta a} \frac{\delta h}{\delta \mu }    
 \,,\, \frac{\delta C}{\delta a}\frac{\delta h}{\delta \mu }
\right)
=
-\,\theta\left\|\,\frac{\delta C}{\delta a}\frac{\delta h}{\delta \mu }\,\right\|^2_\gamma
\end{equation}
which may also be obtained by setting $f=h$ and $\frac{\delta C}{\delta \mu}=0$ in the modified energy-dissipative LP equation \eqref{energy_dissipation_SDP1}, to find
\begin{equation}
\frac{dh( \mu , a)}{dt}  = -\, \theta \left \|\left(0, \frac{\delta C}{\delta a} \frac{\delta h}{\delta \mu } \right)\right \| ^2 _\gamma 
.
\end{equation}
Finally, setting $f=C$ and using $ \frac{\delta C}{\delta \mu }=0$ in the final equation of \eqref{erg-dis1} shows that the energy-dissipative system \eqref{energy_diss_SDP_explicit-noCmu} preserves the Casimir $C$. 
\end{remark}

\subsection{Convergence to steady states of the unmodified LP equations}

In the discussions below, we shall assume that the solutions of the modified (dissipative) equations possess long-time existence. That is, we shall work formally from the viewpoint of mathematical analysis, and ignore the possibility of blow up in finite time.

\begin{theorem}[Steady states]\label{SteadyStates}\rm
For either Casimir-dissipative or energy-dissipative LP equations for semidirect product Lie groups, under the modified dynamics \eqref{Casimir_dissipation_SDP} or \eqref{energy_dissipation_SDP}, the dissipated quantity (Casimir or energy, respectively), assumed to be positive,%
\footnote{As discussed in \cite{FGBHo2012}, one may assume $C \geq 0$, knowing that if $C\ne0$ is indefinite, one may replace it in these formulas by its square, $C\to C^2$, since the squares of Casimirs are still Casimirs.
}
decreases in time until the modified system reaches a { set of states} {\color{green} that} { include the} energy-Casimir equilibria associated to the Casimir $C$, namely $\delta(h+C)=0$, independently of the Lie algebra and the choice of Casimir. 
\end{theorem}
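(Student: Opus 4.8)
The plan is to build the whole argument on the two dissipation identities already established, namely \eqref{Casimir_dissipation_SDP1} for the Casimir-dissipative flow \eqref{Casimir_dissipation_SDP} and \eqref{energy_diss_SDP_gen} for the energy-dissipative flow \eqref{energy_dissipation_SDP}. In each case the conserved quantity drops out by antisymmetry of the Lie--Poisson bracket ($\{h,h\}_+=0$ or $\{C,C\}_+=0$), while the dissipated quantity $D$ (respectively $C$ or $h$) satisfies
\[
\frac{dD}{dt}=-\,\theta\left\|\left[\left(\frac{\delta C}{\delta\mu},\frac{\delta C}{\delta a}\right),\left(\frac{\delta h}{\delta\mu},\frac{\delta h}{\delta a}\right)\right]\right\|_\gamma^2\le 0,
\]
since $\gamma$ is positive and $\theta>0$ (the squared $\gamma$-seminorm is insensitive to the ordering of the bracket, so the two cases coincide here). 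The first step is therefore just to record that $D$ is monotonically nonincreasing along the modified flow.

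Second, I would invoke the positivity hypothesis on $D$: a nonincreasing function that is bounded below converges to a limit $D_\infty\ge 0$ as $t\to\infty$. Integrating the identity above then yields $\theta\int_0^\infty\|[\,\cdot\,,\,\cdot\,]\|_\gamma^2\,dt = D(0)-D_\infty<\infty$, so the dissipation rate is integrable in time and must tend to zero along the flow. Consequently the trajectory approaches the locus on which the $\gamma$-seminorm of the bracket vanishes; when $\gamma$ is nondegenerate on the relevant subspace this is exactly the set where
\[
\left[\left(\frac{\delta C}{\delta\mu},\frac{\delta C}{\delta a}\right),\left(\frac{\delta h}{\delta\mu},\frac{\delta h}{\delta a}\right)\right]=0
\]
in $\mathfrak{g}\,\circledS\,V$. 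I would phrase this as the \emph{terminal condition} of the decay.

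Third, and this is the conceptual heart, I would identify this terminal condition with the energy-Casimir equilibria. At any critical point of $h+C$ one has $\delta(h+C)=0$, i.e. $\delta h/\delta\mu=-\,\delta C/\delta\mu$ and $\delta h/\delta a=-\,\delta C/\delta a$; substituting into the bracket and using its antisymmetry makes it vanish identically. Hence every energy-Casimir equilibrium lies in the terminal set, which is precisely the claim that the decay halts at a state verified by $\delta(h+C)=0$, independently of the Lie algebra and of the choice of Casimir. I would stress that the implication runs only one way: $\delta(h+C)=0$ forces the terminal condition, but the vanishing of the bracket is generically a strictly weaker system of scalar conditions, so the flow need not reach a full energy-Casimir equilibrium. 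This is exactly the asymmetry flagged in the Introduction and realized in the later MHD and Hall MHD examples.

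The main obstacle, were one to seek a genuinely rigorous rather than formal statement, lies in step two: passing from ``the dissipation rate is integrable'' to ``the flow converges to the terminal set.'' In the infinite-dimensional fluid setting this requires precompactness of trajectories together with a LaSalle-type invariance argument, and it is exactly this analytic difficulty that the paper deliberately sidesteps by assuming long-time existence and working formally. Granting that assumption, the only genuine mathematical content is the antisymmetry computation of the third step, which is immediate, so the proof reduces to assembling these pieces for both the Casimir-dissipative and energy-dissipative systems via the exchange $C\leftrightarrow h$.
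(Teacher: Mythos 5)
Your proposal is correct and follows essentially the same route as the paper's proof: the dissipation identity \eqref{Casimir_dissipation_SDP1}/\eqref{energy_diss_SDP_gen}, monotone decay of the positive dissipated quantity until the bracket $\left[\left(\frac{\delta C}{\delta\mu},\frac{\delta C}{\delta a}\right),\left(\frac{\delta h}{\delta\mu},\frac{\delta h}{\delta a}\right)\right]$ vanishes, and the observation that $\delta(h+C)=0$ forces this vanishing by antisymmetry while the converse may fail. The paper merely unpacks this single argument into three cases (no advected variables, all variables modified, only $\mu$ modified) and flags the degenerate situation $\frac{\delta C}{\delta\mu}=0$, which your unified treatment subsumes.
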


\begin{proof} Although a shorter proof of this theorem can be given, we choose to present it in three different cases, depending on how the advected variables are treated. This allows us to make several relevant comments in the proof. These cases are the following: \\
\textbf{(I)} the advected variables $a$ are absent; \\
\textbf{(II)} all of the variables $\mu,a$ are modified; and \\
\textbf{(III)} the advected variables $a$ are present but are left unmodified.

\noindent\textbf{(I)} The first class is the case in which the advected variables $a$ are absent, so that $h=h( \mu )$.
In this case, for \eqref{Casimir_dissipation}, resp. \eqref{LP_form-hdis-mu}, we have
\[
\frac{d}{dt} C( \mu )= -\, \theta\, \left\| \left[ 
\frac{\delta h}{\delta \mu } , \frac{\delta C}{\delta \mu } 
\right]\right\|^2_\gamma
\quad\text{resp.}\quad \frac{d}{dt} h( \mu )= -\, \theta\, \left\| \left[ 
\frac{\delta h}{\delta \mu } , \frac{\delta C}{\delta \mu } 
\right]\right\|^2_\gamma
\,.
\]
Thus, if $h, C \geq 0$ 
and $ \gamma $ is nondegenerate, both solutions converge to an asymptotic state with
\begin{equation}\label{asymptotic_state1} 
\left[ 
\frac{\delta h}{\delta \mu } , \frac{\delta C}{\delta \mu }  
\right]=0.
\end{equation} 
This condition holds for steady states $ \mu _e $ that satisfy the energy-Casimir equilibrium condition $\delta(h+C)/\delta \mu=0$ at $ \mu = \mu _e $, independently of the Lie algebra and the choice of Casimir. Note also that \eqref{asymptotic_state1} means that the $ \theta $-term in  the modified equation \eqref{erg-dis1} tends to zero.
\medskip

\textbf{(A)} In the special case when an ad-invariant pairing $ \kappa$ exists (e.g. if $ \mathfrak{g}$ is semisimple) then $ \operatorname{ad}^*_ \xi \mu =[ \mu , \xi ]$ and if we choose the Casimir $C( \mu )=  \frac{1}{2} \kappa ( \mu , \mu )$, then 
\[
\left[ 
\frac{\delta h}{\delta \mu } , \frac{\delta C}{\delta \mu } 
\right]= -\operatorname{ad}^*_{\frac{\delta h}{\delta \mu }} \mu,
\]
and in this case the solutions of both the Casimir dissipative and energy dissipative LP equations converge to a steady state, for any choice of the Hamiltonian $h$ and for all equilibria, not just for energy-Casimir equilibria. This is the case for the rigid body and for the 2D ideal fluid.
\medskip

\textbf{(B)} The above setting is not the only one in which this occurs. For example, for ideal incompressible 3D fluids with the helicity Casimir $C=\int  \mathbf{u} \cdot \operatorname{curl} \mathbf{u} \,d^3x$, the condition \eqref{asymptotic_state1} becomes
\[
\left[ \mathbf{u} , \operatorname{curl} \mathbf{u}  \right] =0 \quad\text{i.e.,}\quad  \operatorname{curl}( \mathbf{u} \times \operatorname{curl} \mathbf{u} )=0,\quad \text{i.e.,} \quad  \mathbf{u} \times \boldsymbol{\omega} = \nabla p.
\]
These equilibria are the steady Lamb flows, in which the level sets of pressure $p$ form symplectic manifolds \cite{ArKh1998}. In this situation, both the energy-dissipative and Casimir-dissipative LP equations converge to a steady state of the unmodified equations. The latter holds for the case that the Casimir is taken to be helicity-squared, cf. the footnote above.
\medskip

\noindent
\textbf{(II)} For a semidirect product LP system in which all variables are modified, as in \eqref{Casimir_dissipation_SDP} and \eqref{energy_dissipation_SDP}, and if $ \gamma $ is nondegenerate on $ \mathfrak{g}  \times V$, the equations converge as in \eqref{energy_diss_SDP_gen} to a state with both 
\begin{equation}\label{asymptotic_state} 
\left[ \frac{\delta h}{\delta \mu }, \frac{\delta C}{\delta \mu }\right] =0 \quad\text{and}\quad\frac{\delta h}{\delta a} \frac{\delta C}{\delta \mu }-\frac{\delta C}{\delta a} \frac{\delta h}{\delta \mu } =0\,.
\end{equation} 
These conditions mean that the $ \theta $-term in  the modified equation \eqref{erg-dis1} tends to zero.
Again this pair of conditions is satisfied for steady states $( \mu _e , a_e )$ that satisfy the energy-Casimir equilibrium condition $\delta(h+C)/ \delta (\mu , a  ) =0$, at $(\mu,a)=  (\mu _e , a _e)$, independently of the Lie algebra and the choice of Casimir, provided $ \frac{\delta C}{\delta \mu }\ne0$. 

When $ \frac{\delta C}{\delta \mu }=0$, condition \eqref{asymptotic_state} reduces to 
\begin{equation}\label{asymptotic_state-4Cmu=0}
0 = \frac{\delta C}{\delta a} \frac{\delta h}{\delta \mu }\,,
\end{equation}
which is an equilibrium state of the selective decay equation of energy (resp. Casimir) for $\frac{\delta C}{\delta \mu }=0$, as given in \eqref{erg-dis2}.

The requirement $\frac{\delta C}{\delta \mu }=0$ restricts the choice of Casimirs for either the modified LP equations or  the energy-Casimir equilibrium conditions of the unmodified equations. However, this case still retains some physically important cases, such as magnetic helicity for MHD, discussed among the examples in the later sections of the paper, particularly for Example 2 in \S\ref{3dhimhd} and in \S\ref{compMHD}.

\noindent
\textbf{(III)} For semidirect product LP equations with variables $(\mu,a)$ in which only the momentum equation is modified
and for which $ \gamma $ is nondegenerate on $ \mathfrak{g}  $, the solution converges to a solution with
\begin{equation}\label{asymptotic_state3} 
\left[ \frac{\delta h}{\delta \mu }, \frac{\delta C}{\delta \mu }\right] =0.
\end{equation} 
Once more, this condition holds for the class of equilibria for which the energy-Casimir method applies; namely,  the criticality condition $\delta(h+C)=0$.
\end{proof}

\begin{remark}\label{remark-thm-diagram}\rm
Here is a summary sketch diagraming the various lines of reasoning used in the proof of Theorem \ref{SteadyStates}. 

In all cases, we have the following diagram
\begin{displaymath}
\begin{xy}
\xymatrix{
\delta (h+C)=0 \ar@{<=>}[r]  &\frac{\delta h}{\delta \mu }+ \frac{\delta C}{\delta \mu } =0
\;\;\& \;\;
\frac{\delta h}{\delta a} + \frac{\delta C}{\delta a}=0  
\ar@{=>}[r] \ar@{=>}[d]&\eqref{asymptotic_state}  & &\\
\text{steady state}\ar@{<=>}[r] &\operatorname{ad}^*_{\left(  \frac{\delta h}{\delta \mu }, \frac{\delta h}{\delta a}\right) }( \mu , a)=0\,,  & &
}\end{xy}
\end{displaymath}
where implications are denoted by $A\Longrightarrow B$,  and equivalences are denoted by $A\Longleftrightarrow B$.

In the special case $ \frac{\delta C}{\delta \mu }=0$, the diagram becomes
\begin{equation}\label{C_a_only} 
\begin{xy}
\xymatrix{
\delta (h+C)=0 \ar@{<=>}[r]  &{ \frac{\delta h}{\delta \mu}=0}
\;\;\& \;\;
\frac{\delta h}{\delta a} + \frac{\delta C}{\delta a}=0  
\ar@{=>}[r] \ar@{=>}[d]&{\eqref{asymptotic_state-4Cmu=0} } & &\\
\text{steady state}\ar@{<=>}[r] &\operatorname{ad}^*_{\left({\frac{\delta h}{\delta \mu}}
, \frac{\delta h}{\delta a}\right) }( \mu , a)=0\,.  & &
}\end{xy}
\end{equation}
In the case $h=h( \mu )$ the diagram becomes
\begin{displaymath}
\begin{xy}
\xymatrix{
\delta (h+C)=0 \ar@{<=>}[r]  & \frac{\delta h}{\delta \mu }+ \frac{\delta C}{\delta \mu }=0  \ar@{=>}[r] \ar@{=>}[d]& 
\eqref{asymptotic_state1} & &\\
\text{steady state}\ar@{<=>}[r] &\operatorname{ad}^*_{\frac{\delta h}{\delta \mu } } \mu =0\,.  & &
}\end{xy}
\end{displaymath}
Note the directions of the implications. Namely, the critical point conditions $\delta (h+C)=0$ imply that the energy, or Casimir, decay rate vanishes, but not necessarily vice versa. This will become clear, later, when we discover that the number of critical point conditions obtained from $\delta (h+C)=0$ may in some cases exceeds the number of asymptotically vanishing decay rate terms obtained from the modified equations, as in Remark \ref{FF-compMHD}, for example.
\end{remark}
\color{black}

\subsection{Examples}

\subsubsection{Heavy top}

The Hamiltonian for a top spinning under the influence of gravity is the sum of its kinetic and potential energies, 
\begin{eqnarray}
\label{ht-ham}
h(\boldsymbol{\Pi},\boldsymbol{\Gamma})
=
\underbrace{\
\frac{1}{2}\boldsymbol{\Pi}
\cdot
{\rm I}^{-1}\boldsymbol{\Pi}\
}_{\hbox{kinetic}}
\
+\
\underbrace{\
mg \boldsymbol{\chi}\cdot \boldsymbol{\Gamma}\,
}_{\hbox{potential}}
\,,
\label{HT-Hamiltonian}
\end{eqnarray}
in which $\boldsymbol{\Pi}\in\mathbb{R}^3$ is the body angular momentum, $\boldsymbol{\chi}\in\mathbb{R}^3$ is the vector in the body from its point of support to its centre of mass, $mg$ is its weight, $\boldsymbol{\Gamma}(t)=O^{-1}(t)\mathbf{\hat{z}}\in\mathbb{R}^3$ is the vertical direction, as seen from the body and $\rm I$ is the moment of inertia of the top. 
The derivatives of this Hamiltonian are
\[
\frac{\delta  h}{\delta  \boldsymbol{\Pi}}
=
{\rm I}^{-1}\boldsymbol{\Pi}
 =: 
 \boldsymbol{\Omega}
\quad\hbox{and}\quad
\frac{\delta  h}{\delta  {\boldsymbol{\Gamma} }}
 = 
mg \boldsymbol{\chi}
\,.
\]

The heavy-top equations of motion for $\boldsymbol{\Pi} (t) $ and $\boldsymbol{\Gamma} (t) $ emerge from this Hamiltonian and the following Lie--Poisson bracket on the dual of the Euclidean Lie algebra $\mathfrak{se}(3)\simeq\mathfrak{so}(3) \,\circledS\, \mathbb{R}^3 \simeq \mathbb{R}^3 \,\circledS\, \mathbb{R}^3 $, 
\begin{equation}
\frac{d}{dt}(\boldsymbol{\Pi},\boldsymbol{\Gamma})
= {\rm ad}^*_{\left(\frac{\delta h}{\delta\boldsymbol{\Pi}},\frac{\delta h}{\delta\boldsymbol{\Gamma}}\right)}(\boldsymbol{\Pi},\boldsymbol{\Gamma})\,.
\end{equation}
In matrix form, these equations are
\begin{equation} 
\frac{d}{dt}
    \begin{bmatrix}
   \boldsymbol{\Pi} \\
   \boldsymbol{\Gamma}
    \end{bmatrix}
    =
    \begin{bmatrix}
    \boldsymbol{\Pi}\times
   &    
    \boldsymbol{\Gamma} \times
      \\
    \boldsymbol{\Gamma} \times
& 0
    \end{bmatrix}
    \begin{bmatrix}
   \delta  h/\delta \boldsymbol{\Pi} \\
   \delta  h/\delta \boldsymbol{\Gamma}
    \end{bmatrix}
=
    \begin{bmatrix}
   \boldsymbol{\Pi}\times\boldsymbol{\Omega} 
   + \boldsymbol{\Gamma} \times mg \boldsymbol{\chi}\\
   \boldsymbol{\Gamma}\times\boldsymbol{\Omega} 
    \end{bmatrix}.
\label{EP-se3eqns-uwv}
\end{equation}

\begin{remark}[Two Casimirs]\rm
This Lie-Poisson bracket admits two Casimirs. Namely, $C_0(\boldsymbol{\Pi} ,\boldsymbol{\Gamma} )=\frac12|\boldsymbol{\Gamma}|^2$ and $C_1(\boldsymbol{\Pi} ,\boldsymbol{\Gamma} )=\boldsymbol{\Gamma}\cdot \boldsymbol{\Pi}$. 
The quantities $C_0$ and $C_1$ are the body representations of, respectively, the squared magnitude of the spatial unit vertical vector and the vertical component of the spatial angular momentum, both of which are conserved. Conservation of $C_0(\boldsymbol{\Pi} ,\boldsymbol{\Gamma} )$ is merely a geometrical property and its use in selective decay only produces a trivial equilibrium state with no motion.
On one hand, it is interesting to note that the use of $C_0$ in this case does yield a steady state. This is an illustration of the diagram \eqref{C_a_only} in which \eqref{asymptotic_state-4Cmu=0} implies a steady state condition. On the other hand, this fact is not true in the general setting of  \S\ref{a_general_class} (that contains 2D MHD and heavy top); since the use of $C( \mu , a)= \frac{1}{2} \kappa (a,a)$ does not yield a steady state. However, conservation of $C_1(\boldsymbol \Pi,\boldsymbol\Gamma)$ has physical content and it leads to interesting equilibrium states for the case of heavy top dynamics. 
\end{remark}

\paragraph{Casimir dissipation for the heavy top.}
Applying the general equation \eqref{Casimir_dissipation_SDP} for Casimir dissipation in semidirect product dynamics to the heavy top example yields the following modified equations, which are reminiscent of the tippe top equations in \cite{Bou2008} and references therein,
\begin{equation} \label{Heavytop_C1_decay}
\begin{aligned}
\frac{d\boldsymbol\Pi}{dt} + \boldsymbol \Omega\times \boldsymbol \Pi + mg \boldsymbol\chi\times \boldsymbol \Gamma  &= -\theta \boldsymbol\Omega\times(\boldsymbol\Omega\times\boldsymbol\Gamma) -\theta \left( (mg)^2\boldsymbol \chi\times (\boldsymbol\chi\times \boldsymbol \Gamma) - mg\boldsymbol\chi\times (\boldsymbol\Pi\times \boldsymbol\Omega)\right)\\
\frac{d\boldsymbol\Gamma}{dt} + \boldsymbol\Omega\times \boldsymbol\Gamma   &=-\theta \left ( mg\boldsymbol\Omega\times (\boldsymbol\chi\times \boldsymbol\Gamma) - \boldsymbol \Omega\times(\boldsymbol\Pi\times \boldsymbol\Omega)\right).
\end{aligned}
\end{equation} 
The associated Casimir decay rate is
\begin{align}
\frac{d}{dt}(\boldsymbol{\Gamma} \cdot\boldsymbol{\Pi} )
&=
-\,\theta 
 \left\|  \boldsymbol\Omega \times \boldsymbol{\Gamma} \right\|^2
-\,\theta 
 \left\|   \boldsymbol\Omega\times \boldsymbol{\Pi}  + mg \boldsymbol\chi \times \boldsymbol{\Gamma} \right\|^2
. \label{C1norm_decay}
\end{align}
As the energy dissipation is very similar we will discuss both selective decay modifications together.

\paragraph{Energy dissipation for the heavy top}
The same computation using \eqref{energy_dissipation_SDP} for energy dissipation gives
\begin{align}
	\begin{split}
	\frac{d\boldsymbol\Pi}{dt} + \boldsymbol \Omega\times \boldsymbol \Pi + mg \boldsymbol\chi\times \boldsymbol \Gamma
	&= -\theta \boldsymbol\Gamma\times(\boldsymbol\Omega\times\boldsymbol\Gamma) -\theta \left ( mg\boldsymbol \Pi\times (\boldsymbol\chi\times \boldsymbol \Gamma) - \boldsymbol\Pi\times (\boldsymbol\Pi\times \boldsymbol\Omega)\right),\\
	\frac{d\boldsymbol\Gamma}{dt} + \boldsymbol\Omega\times \boldsymbol\Gamma   &=-\theta \left (mg\boldsymbol\Gamma\times (\boldsymbol\chi\times \boldsymbol\Gamma) - \boldsymbol \Gamma\times(\boldsymbol\Pi\times \boldsymbol\Omega)\right ),\\
	\end{split}
	\label{Heavytop_E_decay}
\end{align}
and the associated energy decay rate, cf. \eqref{energy_diss_SDP_gen}
\begin{align}
\frac{dh}{dt} 
&=
-\,\theta 
 \left\|  \boldsymbol\Omega \times \boldsymbol{\Gamma} \right\|^2
-\,\theta 
 \left\|   \boldsymbol\Omega\times \boldsymbol{\Pi}  + mg \boldsymbol\chi \times \boldsymbol{\Gamma} \right\|^2.
 \label{Enorm_decay}
\end{align}
As expected, the energy decay rate is the same as the Casimir decay rate \eqref{C1norm_decay}. 
Therefore, in both cases, as $t\to\infty$ the system will tend to a state which will satisfy the following two equilibrium conditions,
\begin{align}
 \boldsymbol\Omega \times \boldsymbol{\Gamma}  = 0\quad\hbox{and}\quad \boldsymbol\Omega\times \boldsymbol{\Pi}  + mg \boldsymbol\chi \times \boldsymbol{\Gamma}  =0\,.
\label{equi_cond1+2}
\end{align}
\begin{remark}\rm
This example is a good illustration of Remark \ref{remark-thm-diagram}, with the additional feature that here the implications are also equivalences, because $\mathfrak{se}(3)$ is of the special form $ \mathfrak{g}  \,\circledS\, \mathfrak{g}  $, where $ \mathfrak{g}  $ is a quadratic Lie algebra.
\end{remark}

\subsubsection{2D incompressible MHD}\label{remark_2d_ideal_MHD} 
For 2D incompressible MHD with $ \mathbf{B} $ in the $(x,y)$ plane, which will be discussed in the next section, exactly the same situation arises and the conclusion again depends on which Casimir is used. (In the MHD case, the heavy top Casimir $\boldsymbol{\Pi} \cdot \boldsymbol{\Gamma} $ corresponds to the planar MHD Casimir $\int \omega A\,dx\,dy $,  and $\frac{1}{2} | \boldsymbol{\Gamma} | ^2 $ corresponds to $\int A ^2 \,dx\,dy $.) This conclusion also applies for 2D incompressible MHD with $B$ perpendicular to the plane, also discussed in the next section.

\subsubsection{A general class}\label{a_general_class}
 
Both of the previous examples belong to the same general class. These are semidirect products of the type $ \mathfrak{g}  \,\circledS\,V $, where $V= \mathfrak{g}  $ is acted on by the adjoint action and $ \mathfrak{g}  $ admits an Ad-invariant pairing $ \kappa $. In this case, there are the two Casimirs
\[
C(\mu , a)=\kappa ( \mu , a) \quad\text{and}\quad C( \mu , a)= \frac{1}{2} \kappa (a,a),
\]
the unmodified equations \eqref{SDP_LP} read
\[
\partial _t \mu + \left[ \mu , \frac{\delta h}{\delta \mu }\right] + \left[ a, \frac{\delta h}{\delta a}\right] =0, \qquad \partial _t a+ \left[ a, \frac{\delta h}{\delta \mu }\right] =0,
\]
and condition \eqref{asymptotic_state} becomes 
\begin{equation}\label{asymptotic_state_quadratic} 
\left[ \frac{\delta h}{\delta \mu }, \frac{\delta C}{\delta \mu }\right] =0 \quad\text{and}\quad\left[ \frac{\delta h}{\delta a}, \frac{\delta C}{\delta \mu }\right] -\left[ \frac{\delta C}{\delta a} ,\frac{\delta h}{\delta \mu } \right] =0.
\end{equation}

Therefore, when the first Casimir is used, the condition \eqref{asymptotic_state} is equivalent to a steady state condition, so both the energy dissipative and Casimir dissipative (with squared Casimir) converge to a steady state. The corresponding critical point condition $\delta(h+C)=0$ reads $ \frac{\delta h}{\delta \mu }+a=0$, $ \frac{\delta h}{\delta a}+ \mu =0$. If the second Casimir is used, then \eqref{asymptotic_state} reads $[a, \frac{\delta h}{\delta \mu }]=0$ which implies that $a$ reaches a time independent state.
In this case, any steady state of the unmodified equations verifies the condition \eqref{asymptotic_state}. The converse is not true in general, but it is true for the heavy top.

The critical point condition $\delta(h+C)=0$ requires $\frac{\delta h}{\delta \mu }=0$, so the corresponding energy-Casimir equilibrium is trivial.

If only the momentum $\mu $ is modified, then only the first Casimir should be used since the second one yields no changes in the equation. In this case, the asymptotic solution verifies $[a, \frac{\delta h}{\delta \mu }]=0$, which implies that $a$ reaches a time independent state.

\color{black}

\section{Main Example: Selective decay for MHD} \label{MHD-sec}

In the barotropic (resp. incompressible) magnetohydrodynamics (MHD) approximation, plasma motion in three dimensions is governed by the following system of equations, see \cite{HMRW1985} and references therein:
\begin{equation}\label{MHD} 
\rho (\partial _t \mathbf{u} + \mathbf{u} \cdot \nabla \mathbf{u} )= - \nabla p+ \mathbf{J} \times \mathbf{B} , \quad \partial _t \mathbf{B}=- \operatorname{curl}\mathbf{E} , \quad \partial _t \rho + \operatorname{div}(\rho  \mathbf{u} )=0, \quad \operatorname{div} \mathbf{B} =0, 
\end{equation} 
where $p=p( \rho )$, (resp., $ \operatorname{div} \mathbf{u} =0$). 
Here  $\mathbf{B}$ denotes the magnetic field, $\mathbf{J} := \operatorname{curl} \mathbf{B} $ is the electric current density and $ \mathbf{E} := - \mathbf{u} \times \mathbf{B} $ expresses  the electric field in a frame moving with the fluid.

The pressure $p$ in the barotropic case is a given function of the mass density $ \rho $: $p=p (\rho) $. In contrast, $p$ is determined for the incompressible case by requiring that the condition $ \operatorname{div}\mathbf{u}= 0$ be preserved in time.

The barotropic MHD equations \eqref{MHD} can be augmented by including the specific entropy $ \eta $ verifying the advection equation $ \partial _t \eta + \mathbf{u} \cdot \nabla \eta =0$, and by considering pressure $p$ as resulting from the First Law of Thermodynamics,
\[
de = \rho^{-2}p\,d\rho + T d\eta
\]
for a given equation of state $e=e( \rho , \eta )$ for the internal energy per unit mass. The resulting \emph{isentropic} MHD equations are given in \eqref{MHD-eta}, and their properties under selective decay will be treated in Section \ref{compMHD}.

\subsection{Selective decay for three-dimensional MHD}

\subsubsection{3D homogeneous incompressible MHD}\label{3dhimhd} 

Consider the { Lie--Poisson bracket}  \cite{{MoGr1980},HoKu1983a,HoKu1983b,HMRW1985}
\[
\{f,g\}_+( \mathbf{m} , \mathbf{B} )
= \int_ \mathcal{D} \mathbf{m} \cdot \left[ \frac{\delta f}{\delta \mathbf{m} }, \frac{\delta g}{\delta \mathbf{m} }\right] \,d^3x   
+ \int_ \mathcal{D} \left( \operatorname{curl} \left( \mathbf{B} \times \frac{\delta f}{\delta \mathbf{m} } \right) \cdot \frac{\delta g}{\delta \mathbf{B} }- \operatorname{curl} \left( \mathbf{B} \times \frac{\delta g}{\delta \mathbf{m} } \right) \cdot \frac{\delta f}{\delta \mathbf{B} }\right) \,d^3x,
\]
with $ \operatorname{div} \mathbf{B} =0$. For the Hamiltonian
\[
h( \mathbf{m} , \mathbf{B} )= \int_ \mathcal{D} \left( \frac{1}{2} | \mathbf{m} | ^2 + \frac{1}{2} | \mathbf{B} | ^2 \right) \,d^3x 
\,,
\]
the Lie--Poisson equations recover \eqref{MHD} in the incompressible case, upon redefining the pressure. 

\paragraph{3D MHD Casimirs.} 
Incompressible 3D MHD has two Casimirs, the cross helicity and the magnetic helicity
\[
C_1( \mathbf{m} , \mathbf{B} )
=\int _ \mathcal{D} \mathbf{m} \cdot \mathbf{B} \,\,d^3x \quad\text{and}\quad  C_2( \mathbf{B} )=\frac{1}{2} \int _ \mathcal{D} \mathbf{B} \cdot {\rm curl}^{-1}\mathbf{B} \,\,d^3x \,.
\]
Note that $C_2$ is well-defined for $ \operatorname{div} \mathbf{B} =0$ and $H ^1 ( \mathcal{D} )= H ^2( \mathcal{D} )=0$.

\paragraph{Example 1: Cross helicity.} { 
For the cross helicity}, the modified momenta are
\[
\widetilde{ \mathbf{m}}= \mathbf{m}+ \theta \operatorname{curl}( \mathbf{m} \times \mathbf{B}) , \quad \widetilde {\mathbf{B}}= \mathbf{B} + \theta  \left( \operatorname{curl} \mathbf{B} \times \mathbf{B} -\operatorname{curl} \mathbf{m} \times \mathbf{m} {- \nabla \phi} \right).
\]
{
In this case, equations \eqref{Casimirdissipation} and \eqref{Casimirdissipation2} imply
\begin{align}\label{dissip_3DMHD-xhelicity} 
\begin{split}
\frac{d}{dt} C_1( \mathbf{m} , \mathbf{B} )
&= - \theta \int _ \mathcal{D} \left |\operatorname{curl} \left( \mathbf{m} \times  \mathbf{B}\right)  \right | ^2 d ^3 x
- \theta \int_ \mathcal{D} \left | \operatorname{curl} \mathbf{B} \times \mathbf{B} 
- \operatorname{curl} \mathbf{m} \times \mathbf{m} - \nabla \phi \right |^2  d ^3 x\,,
\\
\frac{d}{dt} C_2( \mathbf{m} , \mathbf{B} )
&=
- \theta \int_ \mathcal{D} (- \mathbf{B} \times \mathbf{m} - \nabla \phi _1 ) \cdot ( \operatorname{curl} \mathbf{B} \times \mathbf{B} - \operatorname{curl} \mathbf{m} \times \mathbf{m} - \nabla \phi _2 ) d ^3 x  
\,.
\end{split} 
\end{align}

When the squared Casimir $C_1^2$ is considered in the first line in \eqref{dissip_3DMHD-xhelicity}, the solutions converge to state with
\begin{equation}\label{state_dissip_3DMHD-xhelicity}
{\rm curl}\, (\mathbf{m} \times \mathbf{B}) = 0
\quad\hbox{and}\quad
\operatorname{curl}\,(\operatorname{curl} \mathbf{B} \times \mathbf{B} 
- \operatorname{curl} \mathbf{m} \times \mathbf{m}) =0
\,.
\end{equation} 
which is a an equilibrium of the unmodified equations.
The sub-case,
\[
\operatorname{curl} \mathbf{B} \times \mathbf{B}= 0
\quad\hbox{and}\quad
\mathbf{m} = 0
\,,
\]
comprises the ``force-free'' equilibria, introduced in \cite{Wo1958,Wo1959,Wo1960} and discussed in the context of toroidal $z$ pinch operation in \cite{Ta1974,Ta1986}. See \cite{GrKa1969} for a review of the earliest work in this field and \cite{BrCaPe1999} for discussions of later work. Usually, these equations are associated with minimization of the energy at constant helicity $C_2$. However, here they are associated with selective decay of the cross helicity $C_1$ at constant energy. }

\paragraph{Example 2: Magnetic helicity.} When the magnetic helicity is chosen, the modified momenta are
\[
\widetilde{ \mathbf{m} }= \mathbf{m} =:\mathbf{u} 
\quad\text{and}\quad 
\widetilde{ \mathbf{B} }= \mathbf{B} - \theta (\mathbf{E}+ { \nabla \phi }) 
\,,\quad\text{with}\quad 
\mathbf{E} := -\,\mathbf{u} \times \mathbf{B} 
.
\]
The dissipative LP equations \eqref{Casimir_dissipation_SDP}  associated to the magnetic helicity read
\[
\partial _t \mathbf{u} + \mathbf{u} \cdot \nabla \mathbf{u} = - \nabla p+ \mathbf{J} \times (\mathbf{B} -\theta (\mathbf{E}+{ \nabla \phi })  ), \quad \partial _t \mathbf{B}= -\operatorname{curl}(\mathbf{E} + \theta \mathbf{u} \times (\mathbf{E}+ { \nabla \phi })  ),
\]
where $ \operatorname{div}  \mathbf{u} =0$, $\operatorname{div} \mathbf{B} =0$, and we recall that $ \mathbf{J} = \operatorname{curl} \mathbf{B} $, $\mathbf{E} =-\, \mathbf{u} \times \mathbf{B} $, {and $ \nabla \phi $ is such that $ \mathbf{E} + \nabla \phi $ is divergence free}.
{
In this case, we get
\begin{align}\label{dissip_3DMHD-xhelicity2} 
\begin{split}
\frac{d}{dt} C_1( \mathbf{m} , \mathbf{B} )
& = \theta \int _ \mathcal{D}  (\mathbf{u}\times\operatorname{curl} \mathbf{u}  +\operatorname{curl} \mathbf{B} \times \mathbf{B}- \nabla \phi _1 )
\cdot( \mathbf{E}+ {\nabla \phi}_2 )  d ^3 x\,,
\\\,\frac{d}{dt} C_2( \mathbf{m} , \mathbf{B} )
&= 
  - \,\theta \int_ \mathcal{D} |\mathbf{u} \times \mathbf{B}{- \nabla \phi} | ^2 \,d^3x  
=-\, \theta \int_ \mathcal{D} |\mathbf{E}+ {\nabla \phi} | ^2 \,d^3x  \,,
\end{split} 
\end{align} 
so the system tends to a state verifying $\mathbf{E}+ {\nabla \phi} = 0$, so that $\partial_t\mathbf{B}=0$.
}

\begin{remark}\rm
Although the fluid momentum $\mathbf{m}$ is not modified { in this example}, the modification of $ \mathbf{B} $ produces changes in both the $ \mathbf{u} $- and $ \mathbf{B} $-equations, as explained in \S\ref{subsec_SDP}. 
Hence, selective decay by Casimir dissipation in this case produces loss of magnetic helicity. This means the introduction of Casimir dissipation causes a loss of linkages in the $\mathbf{B}$ field lines, due to reconnection.
\end{remark}

\paragraph{Example 3: Energy-dissipative case -- magnetic Lamb surfaces.} 
Equation \eqref{energy_diss_SDP_explicit} implies 
{ the following constant $C$, energy-dissipative, modified 3D MHD equations}
\begin{align*} 
&\partial _t \mathbf{m} + \operatorname{curl} \mathbf{m} \times \frac{\delta h}{\delta \mathbf{m} } + \mathbf{B} \times  \operatorname{curl} \frac{\delta h}{\delta \mathbf{B} } = -\theta \operatorname{curl} \left( \operatorname{curl} \left( \frac{\delta h}{\delta \mathbf{m} }\times \frac{\delta C}{\delta \mathbf{m} }\right) \right) ^\flat \times \frac{\delta C}{\delta \mathbf{m} }  \\
& \qquad \qquad \qquad - \theta   \left( \operatorname{curl} \frac{\delta h}{\delta \mathbf{B} }\times \frac{\delta C}{\delta \mathbf{m} }- \operatorname{curl} \frac{\delta C}{\delta \mathbf{B} }\times \frac{\delta h}{\delta \mathbf{m} }- \nabla \phi \right) ^\flat \times \operatorname{curl} \frac{\delta C}{\delta \mathbf{B} } - \nabla p, \\
&\partial _t \mathbf{B} + \operatorname{curl} \left( \mathbf{B} \times \frac{\delta h}{\delta \mathbf{m} }\right)  + \theta \operatorname{curl} \left( \left(\operatorname{curl}  \frac{\delta h}{\delta \mathbf{B} }\times \frac{\delta C}{\delta \mathbf{m} }- \operatorname{curl} \frac{\delta C}{\delta \mathbf{B} }\times \frac{\delta h}{\delta \mathbf{m} }- \nabla \phi \right) ^\flat \times \frac{\delta C}{\delta \mathbf{m} }\right)  =0\,,
\end{align*} 
where $ \nabla \phi $ is such that the term inside the parenthesis is divergence free.
For the MHD Hamiltonian, {in the energy-dissipative case when the cross helicity $C_1$ is held constant the corresponding $C_1$-modified 3D MHD} equations read, 
\begin{align*} 
&\partial _t \mathbf{u} + \mathbf{u} \cdot \nabla \mathbf{u} + \mathbf{B} \times \mathbf{J} = \theta (\operatorname{curl} \operatorname{curl} \mathbf{E}) \times \mathbf{B} - \theta \left( \operatorname{curl} \mathbf{B} \times \mathbf{B} - \operatorname{curl} \mathbf{u} \times \mathbf{u}- \nabla \phi  \right) \times \operatorname{curl} \mathbf{u} - \nabla p,\\
&\partial _t \mathbf{B} + \operatorname{curl}( \mathbf{B} \times \mathbf{u} )+ \theta \operatorname{curl} \left(  \left( \operatorname{curl} \mathbf{B} \times \mathbf{B} - \operatorname{curl} \mathbf{u} \times \mathbf{u}- \nabla \phi  \right) \times \mathbf{B} \right) =0
\,.
\end{align*} 
As expected, the energy dissipates at constant cross helicity {$C_1$} as
\[
\frac{d}{dt} \int_ \mathcal{D}\left(  \frac{1}{2} | \mathbf{u} | ^2 + \frac{1}{2} | \mathbf{B} | ^2\right)  d ^3 x= - \theta \int_ \mathcal{D} | \operatorname{curl}( \mathbf{u} \times \mathbf{B} )| ^2 d ^3 x- \theta \int_ \mathcal{D} | \operatorname{curl} \mathbf{B} \times \mathbf{B} - \operatorname{curl} \mathbf{u} \times \mathbf{u} - \nabla \phi | ^2 d ^3 x
\,. 
\]
Thus, the solution once again converges to a steady state of the unmodified 3D incompressible MHD equations. {By a remarkable coincidence, in this case the cross helicity $C_1$ and the magnetic helicity Casimir $C_2$ are both \textit{conserved}, as shown by a direct computation using the modified 3D MHD equations above.}

{When the magnetic helicity $C_2$ is chosen to remain constant while the energy $h$ dissipates, the modified  3D MHD} equations become
\begin{align*}
&\partial _t \mathbf{u} + \mathbf{u} \cdot \nabla \mathbf{u} + \mathbf{B} \times \mathbf{J} = \theta \left( \mathbf{B} \times \mathbf{u} + \nabla \phi \right) \times \mathbf{B}    - \nabla p\\
&\partial _t \mathbf{B} + \operatorname{curl}( \mathbf{B} \times \mathbf{u} )=0. 
\end{align*}
In this case, the advection equation for $ \mathbf{B} $ remains unmodified. The energy dissipates as
\[
\frac{d}{dt} \int_ \mathcal{D}\left(  \frac{1}{2} | \mathbf{u} | ^2 + \frac{1}{2} | \mathbf{B} | ^2\right)  d ^3 x 
=  - \,\theta \int_ \mathcal{D} |\mathbf{u} \times \mathbf{B}- \nabla \phi | ^2 \,d^3x  
=-\, \theta \int_ \mathcal{D} |\mathbf{E}+ \nabla \phi | ^2 \,d^3x  
\,,\]
{and again the magnetic helicity Casimir $C_2$ and the cross helicity $C_1$ are both \textit{conserved}.
Thus, the system again tends to a state verifying $\mathbf{E}+ {\nabla \phi} = 0$, so that $\partial_t\mathbf{B}=0$.}
In this state, the velocity vector $\mathbf{u}$ and the magnetic field vector $ \mathbf{B}$ are both tangent to each level set of the electrical potential $\phi$. These level sets may be called \emph{magnetic Lamb surfaces}, in analogy to the well-known Lamb surfaces for incompressible Euler fluid equilibria. 

\paragraph{The effects of rotation.} Rotation can be easily included in the Lie--Poisson formulation, by considering the Hamiltonian
\[
h( \mathbf{m} , \mathbf{B} )= \int_ \mathcal{D} \left( \frac{1}{2} | \mathbf{m} - \mathbf{R} | ^2 + \frac{1}{2} | \mathbf{B} | ^2 \right) \,d^3x,
\]
where $ \operatorname{curl} \mathbf{R} = 2 \boldsymbol{\Omega} $ is the Coriolis parameter (i.e., twice the angular
rotation frequency). The equations for Casimir dissipation can be derived by using $ \frac{\delta h}{\delta \mathbf{m} }= \mathbf{m} - \mathbf{R} = \mathbf{u} $ and $ \frac{\delta h}{\delta \mathbf{B} }= \mathbf{B} $. For example, for the magnetic helicity, we get, exactly as before,
\[
\widetilde{ \mathbf{m} }= \mathbf{m} \quad\text{and}\quad\widetilde{ \mathbf{B} }= \mathbf{B} - \theta (\mathbf{E}+ { \nabla \phi }).
\]
The modified equations now read 
\begin{align*}
&\partial _t \mathbf{u} + \mathbf{u} \cdot \nabla \mathbf{u} + 2 \boldsymbol{\Omega} \times \mathbf{u} = - \nabla p+ \mathbf{J} \times (\mathbf{B} - \theta (\mathbf{E} + \nabla \phi ))\\
&\partial _t \mathbf{B}= -\operatorname{curl}(\mathbf{E} + \theta \mathbf{u} \times (\mathbf{E} + \nabla \phi ))
\,.
\end{align*}

\subsubsection{3D compressible isentropic MHD}\label{compMHD}

Here we include the specific entropy $ \eta $ in the equations.
The Lie--Poisson bracket is
\begin{equation}\label{LP_3DMHD_comp}
\begin{aligned} 
\{f,g\}_+( \mathbf{m} , \rho , \eta ,\mathbf{B} )&= \int_ \mathcal{D} \mathbf{m} \cdot \left[ \frac{\delta f}{\delta \mathbf{m} }, \frac{\delta g}{\delta \mathbf{m} }\right] \,d^3x  +\int_ \mathcal{D}  \rho\left( \frac{\delta g}{\delta \mathbf{m} }\cdot \nabla \frac{\delta f}{\delta \rho }- \frac{\delta f}{\delta \mathbf{m} }\cdot \nabla \frac{\delta g}{\delta \rho }  \right)\,d^3x \\
& \quad + \int_ \mathcal{D} \eta   \left( \operatorname{div} \left( \frac{\delta f}{\delta \eta } \frac{\delta g}{\delta \mathbf{m} }\right)- \operatorname{div} \left( \frac{\delta g}{\delta \eta } \frac{\delta f}{\delta \mathbf{m} }\right)  \right) \,d^3x \\
& \quad + \int_ \mathcal{D} \left( \operatorname{curl} \left( \mathbf{B} \times \frac{\delta f}{\delta \mathbf{m} } \right) \cdot \frac{\delta g}{\delta \mathbf{B} }- \operatorname{curl} \left( \mathbf{B} \times \frac{\delta g}{\delta \mathbf{m} } \right) \cdot \frac{\delta f}{\delta \mathbf{B} }\right) \,d^3x,
\end{aligned} 
\end{equation} 
with $ \operatorname{div} \mathbf{B} =0$,.
The Lie--Poisson equations with Hamiltonian
\begin{equation}\label{MHD-Ham} 
h( \mathbf{m} , \rho ,\eta , \mathbf{B} )= \int_ \mathcal{D} \left( \frac{1}{2\rho }| \mathbf{m} | ^2 + \rho e ( \rho, \eta  )+ \frac{1}{2}| \mathbf{B} |^2 \right) \,d^3x ,
\end{equation} 
where $e ( \rho , \eta )$ is the 
specific internal energy, recover \eqref{MHD}
with an additional advection equation for the specific entropy variable $ \eta $. Namely,
\begin{align}\label{MHD-eta} 
\begin{split}
\rho (\partial _t \mathbf{u} + \mathbf{u} \cdot \nabla \mathbf{u} ) = - \nabla p+ \mathbf{J} \times \mathbf{B} , \quad \partial _t \mathbf{B}=- \operatorname{curl}\mathbf{E}
\,,\\
\partial _t \rho + \operatorname{div}(\rho  \mathbf{u} )=0\,, \quad
\partial _t \eta + \mathbf{u} \cdot\nabla \eta =0\,, \quad 
\operatorname{div} \mathbf{B} =0
\,.
\end{split}
\end{align} 
Here again $\mathbf{B}$ denotes the magnetic field, $\mathbf{J} := \operatorname{curl} \mathbf{B} $ is the electric current density, $ \mathbf{E} := - \mathbf{u} \times \mathbf{B} $ expresses  the electric field in a frame moving with the fluid, and now the pressure $p(\rho,\eta)$ is determined from the equation of state $e(\rho,\eta)$ by the First Law, as $p(\rho,\eta)=\rho ^2 {\partial e}/{\partial \rho }$.

\paragraph{Example 1: Potential magnetic intensity.} 
Three-dimensional compressible isentropic MHD possesses a scalar material invariant called \emph{potential magnetic intensity} (PMI), defined as
\[
q_B:= \mathbf{B}/ \rho  \cdot \nabla  \eta
\,,\]
in analogy to the potential vorticity for isentropic compressible fluid flow.
Material invariance of the scalar quantity $q_B$ may be derived by combining the three MHD advection laws for magnetic flux, $\mathbf{B}\cdot d\mathbf{S}$, specific entropy, $\eta$, and mass, $\rho d^3x$. Combining the first two of these three advected quantities into an advected volume-form yields a material-invariant 3-form in addition to the mass 3-form, so that
\[
(\partial_t+\pounds_\mathbf{u} )(\mathbf{B}\cdot d\mathbf{S} \wedge d\eta)=0
= (\partial_t+\pounds_\mathbf{u} )(\rho\, d^3x)
\,.
\]
Hence, the scalar quantity $q_B$ is advected, according to
\[
(\partial_t+\pounds_\mathbf{u} )q_B = \partial_t q_B + \mathbf{u}\cdot\nabla q_B = 0
\]
That is, the quantity $q_B$ is a scalar material invariant  (i.e., the scalar $q_B$ is conserved on particles). 
The material invariance of $q_B$ implies an additional class of Casimir functions for 3D compressible MHD, given by  functionals of the potential $\mathbf{B}$-field
\[
C ( \mathbf{m} , \rho , \eta , \mathbf{B} )
= \int_ \mathcal{D} \rho\, \Phi \left(q_B \right) \,d^3x 
= \int_ \mathcal{D} \rho \Phi \left( \mathbf{B}/ \rho  \cdot \nabla  \eta \right) \,d^3x 
\,,\]
for an arbitrary smooth function, $\Phi$. Note that, necessarily, $\widetilde{\mathbf{m}} = \mathbf{m} $ with these Casimirs. So we may again use \eqref{Casimir_dissipation_SDP} to introduce  Casimir dissipation and \eqref{energy_dissipation_SDP} to introduce energy dissipation. Since $ \delta C/ \delta \mathbf{m} =0$, the energy-dissipative approach keeps all the advection equations unchanged, see \eqref{energy_diss_SDP_explicit}. Although the explicit equations of motion may have complicated expressions, from \eqref{erg-dis2}, we easily obtain the energy dissipation in this case as
\[
\frac{d}{dt}h( \mathbf{m} , \rho , \eta , \mathbf{B} )
= - \,\theta \int_ \mathcal{D} \left| \mathbf{u} \cdot \nabla \frac{\delta C}{\delta \rho }\right | ^2 
+ \left | \operatorname{div} \left( \mathbf{u} \frac{\delta C}{\delta \eta }\right)   \right | ^2 
+ \left |\, \mathbb{P} \left( \operatorname{curl} \frac{\delta C}{\delta \mathbf{B} }\times \mathbf{u}\right)  \right | ^2 d ^3 x.
\]

\paragraph{Example 2: Magnetic helicity.} Another Casimir for compressible MHD (either with or without the specific entropy variable $ \eta $) is given as in the case of incompressible MHD  by the magnetic helicity
\begin{equation}\label{magnetic-helicity}
C( \mathbf{m} , \rho , \eta , \mathbf{B} )=\frac{1}{2} \int_ \mathcal{D} \mathbf{B} \cdot \operatorname{curl} ^{-1} \mathbf{B} \, \,d^3x  \,. 
\end{equation}
\begin{enumerate}
\item
In the {\bf Casimir-dissipative case}, the variational derivatives of the magnetic helicity imply the modified variables $\widetilde{ \mathbf{m} }=\mathbf{m}$, $\widetilde{ \rho }= \rho $, $\widetilde{ \eta }= \eta $, and
\[
\widetilde{ \mathbf{B} }
= \mathbf{B} + \theta\, \mathbb{P}\left( \frac{\delta h}{\delta\mathbf{m}}\times \operatorname{curl} \frac{\delta C}{\delta \mathbf{B} } - \frac{\delta C}{\delta\mathbf{m}}\times \operatorname{curl} \frac{\delta h}{\delta \mathbf{B} } \right) 
= \mathbf{B} - \theta \,\mathbb{P} \mathbf{E}
\,,
\]
when $ \gamma $ is chosen as the standard dot product, as in \eqref{magnetic-helicity}. 

As with the incompressible case, 
we find the Casimir dissipation equations
\[
\rho (\partial _t \mathbf{u} + \mathbf{u} \cdot \nabla \mathbf{u} )
= - \nabla p+ \mathbf{J} \times (\mathbf{B} - \theta\,\mathbb{P}\mathbf{E})
\,, \quad 
\partial _t \mathbf{B}
= -\operatorname{curl}(\mathbf{E} + \theta\, \mathbf{u} \times \mathbb{P}\mathbf{E}),
\]
which results in the dissipation of magnetic helicity,
\begin{equation}\label{helicity-diss}
\frac{d}{dt} \frac{1}{2} \int_ \mathcal{D} \mathbf{B} \cdot \operatorname{curl} ^{-1} \mathbf{B} \,\,d^3x 
=-\, \theta \int_ \mathcal{D} |\mathbb{P}\mathbf{E}| ^2 \,d^3x
\,.
\end{equation}

\item
In the {\bf energy-dissipative case}, since $ \delta C/ \delta \mathbf{m} =0$, only the momentum equation is modified. It reads
\[
\partial _t \frac{\mathbf{m} }{ \rho }
+\operatorname{curl}  \frac{\mathbf{m} }{\rho }  \times \frac{\delta h}{\delta \mathbf{m} }
+ \nabla  \left( \frac{\mathbf{m}  }{\rho }\!\cdot \!\frac{\delta h}{\delta \mathbf{m} }\right) 
= - \nabla \frac{\delta h}{\delta \rho } + \frac{1}{\rho } \frac{\delta h}{\delta \eta } \nabla \eta  
+  \frac{1}{\rho } \operatorname{curl} \frac{\delta h}{\delta \mathbf{B} } \times \mathbf{B}
+ \frac{\theta }{\rho }(\mathbb{P}( \mathbf{B} \times \mathbf{u}) ) \times \mathbf{B}.
\]
Correspondingly, the energy dissipates as
\begin{equation}\label{ECdiss-maghel}
\frac{d}{dt} h( \mathbf{m} , \rho , \eta , \mathbf{B} )
= -\, \theta \int_ \mathcal{D}  \left | \mathbb{P}(\mathbf{B} \times \mathbf{u}) \right | ^2 d ^3 x
=- \,\theta \int_ \mathcal{D}  \left | \mathbb{P}\mathbf{E}  \right | ^2 d ^3 x.
\end{equation}
Thus, the asymptotic solution verifies $ \mathbb{P}\mathbf{E} = \mathbf{E}+ \nabla \phi  = 0$
with ${\rm div}(\mathbb{P}\mathbf{E})=0$.
\end{enumerate}

\begin{remark}[Force-free compressible MHD equilibria]\label{FF-compMHD}\rm 
When the Casimir is chosen to be the magnetic helicity in \eqref{magnetic-helicity}, the associated energy-Casimir equilibrium conditions arising from $ \delta (h+C)=0$ read
\[
\frac{\delta (h+C)}{\delta \mathbf{m} } =0 \quad\text{and}\quad \frac{\delta (h+C)}{\delta\mathbf{B} }=0
\,.
\]
These equilibrium conditions yield both $ \mathbf{u} =0$ and $ \mathbf{B} + \operatorname{curl} ^{-1} \mathbf{B} =0$ with ${\rm div}\mathbf{B}=0$. Thus, in this case, $ \delta (h+C)=0$ implies both the equations for the ``force-free'' equilibria,
\begin{equation}\label{ECcond-maghel}
\mathbf{J} \times \mathbf{B} = \mathbf{0}\
\;\;(\hbox{upon using }\mathbf{J} := \operatorname{curl}\mathbf{B})
\quad\hbox{and}\quad
\mathbb{P}\mathbf{E} =\mathbf{0}
\,.
\end{equation}

In contrast to the two energy-Casimir conditions in \eqref{ECcond-maghel}, the energy-dissipative approach for constant magnetic helicity implies in \eqref{ECdiss-maghel} only the single condition $\mathbb{P}\mathbf{E} =\mathbf{0}$ at equilibrium. Thus, as stated in Theorem \ref{SteadyStates}, the energy-dissipative equilibrium conditions are \emph{consistent with} the associated energy-Casimir equilibrium conditions. However, the latter may contain more conditions and thus be more restrictive. { Hence, the set of energy-Casimir equilibria is a subset of the corresponding energy-dissipative equilibria.}

For energy dissipation at constant magnetic helicity, the diagram with the various implications discussed in Remark \ref{remark-thm-diagram} after Theorem \ref{SteadyStates} reads
\begin{displaymath}
\begin{xy}
\xymatrix{
\delta (h+C)=0 \ar@{<=>}[r]  &\mathbf{u} =0\;\;\& \;\;\mathbf{B} + \operatorname{curl} ^{-1} \mathbf{B} =0   \ar@{=>}[r] \ar@{=>}[d]&\eqref{asymptotic_state-4Cmu=0}: \mathbb{P}\mathbf{E} =0\,, & &\\
\text{steady state}\ar@{<=>}[r] &\operatorname{ad}^*_{\left(  \frac{\delta h}{\delta \mathbf{m}  }, \frac{\delta h}{\delta\mathbf{B} }\right) }( \mathbf{m}  , \mathbf{B} )=0\,.  & &
}\end{xy}
\end{displaymath}
That is, the implications need not always be equivalences.
\end{remark}

\paragraph{Case 3: Cross helicity.} When the specific entropy variable $ \eta $ is absent, the \emph{cross-helicity} given by
\[
C( \mathbf{m} , \rho  , \mathbf{B} )=\int_ \mathcal{D} \frac{1}{\rho }\mathbf{m} \cdot \mathbf{B}\,\,d^3x\,
\]
is a Casimir. In this case, the modified variables are found from \eqref{both_momenta} to be 
\begin{align*} 
\widetilde{ \mathbf{m} }&= \mathbf{m} + \theta \left[ \frac{\delta h}{\delta \mathbf{m} }, \frac{\delta C}{\delta \mathbf{m} }\right] ^\flat 
= \mathbf{m} + \theta \Big[ \mathbf{u} , \rho ^{-1} \mathbf{B} \Big]^\flat 
,\\
\widetilde{ \rho }&=\rho + \theta \left( \frac{\delta C}{\delta \mathbf{m} }\cdot \nabla \frac{\delta h}{\delta \rho }- \frac{\delta h}{\delta \mathbf{m} }\cdot \nabla \frac{\delta C}{\delta \rho }  \right)^\flat  = \rho + \theta \left(\rho ^{-1} \mathbf{B} \cdot \nabla \left(  e+ \rho \frac{\partial e}{\partial \rho }- \frac{1}{2} | \mathbf{u} | ^2 \right)  - \mathbf{u}\cdot \nabla ( \rho ^{-1} \mathbf{B} \cdot \mathbf{u} ) \right)^\flat   
,\\
\widetilde{ \mathbf{B} }&= \mathbf{B} + \theta \left( \frac{\delta h}{\delta\mathbf{m}}\times \operatorname{curl} \frac{\delta C}{\delta \mathbf{B} }- \frac{\delta C}{\delta\mathbf{m}}\times \operatorname{curl} \frac{\delta h}{\delta \mathbf{B} }  - \nabla \phi  \right)^\flat = \mathbf{B} + \theta \left( \mathbf{u} \times \operatorname{curl} \mathbf{u}  - \rho ^{-1} \mathbf{B} \times \operatorname{curl} \mathbf{B} - \nabla \phi  \right) ^\flat
.
\end{align*}
For vectors in $\mathbb{R}^3$, the $\flat$ operation (flat) is the identity, so we may drop it.  
Our formalism allows us to select which among these variables will be  modified. For example, modifying $ \mathbf{m} $ only, while keeping $\rho , \eta , \mathbf{B} $ unchanged, and using the positive bilinear form $ \gamma _\rho ( \mathbf{u} , \mathbf{v} )=\int_ \mathcal{D} \rho \,\mathbf{u} \cdot \mathbf{v}\, \,d^3x $ produces the following modification of the $ \mathbf{u}$-equation
\begin{equation}\label{mod-moteqn-m-only}
\rho (\partial _t \mathbf{u} + \mathbf{u} \cdot \nabla \mathbf{u} )= - \nabla p+ \mathbf{J} \times \mathbf{B} 
-  \theta \pounds_\mathbf{u}\mathsf{X},
\end{equation}
where $ \mathsf{X}= \rho [ \mathbf{u} , \rho ^{-1} \mathbf{B} ]^\flat$.


\begin{enumerate}
\item
In the {\bf Casimir-dissipative case}, we find the decay rate for cross helicity to be
\begin{equation}
\frac{d}{dt} \int_ \mathcal{D} \mathbf{u} \cdot \mathbf{B}\, \,d^3x = - \theta \int_ \mathcal{D} \frac{1}{\rho } | \mathsf{X}| ^2 \,d^3x 
\,.
\label{CasDis-compMHD}
\end{equation}
In the energy-Casimir stability method for compressible MHD \cite{HMRW1985}, the velocity equilibrium condition  is given by 
\begin{equation}
\frac{\delta (h+C)}{\delta \mathbf{m}} = \mathbf{u} +  \rho ^{-1} \mathbf{B} = 0
\quad\hbox{at equilibrium.}
\label{vel-eqmcond1}
\end{equation}
Under this condition, and consistently with Theorem \ref{SteadyStates}, the commutator vanishes in the definition of the 1-form density $ \mathsf{X}= \rho [ \mathbf{u} , \rho ^{-1} \mathbf{B} ]$ when the flow velocity achieves its equilibrium form. That is, $| \mathsf{X}|^2=0$ in \eqref{CasDis-compMHD} for  energy-Casimir equilibria of ideal compressible fluid flow.

\item
We now consider the {\bf energy-dissipative case}. As before, we consider only the modification in the momentum equations.
Using the same positive bilinear form $ \gamma _\rho ( \mathbf{u} , \mathbf{v} )=\int_ \mathcal{D} \rho \,\mathbf{u} \cdot \mathbf{v}\, \,d^3x $ as before, we get, cf. equation \eqref{mod-moteqn-m-only},
\begin{equation}\label{moderg-moteqn-m-only}
\rho (\partial _t \mathbf{u} + \mathbf{u} \cdot \nabla \mathbf{u} )= - \nabla p+ \mathbf{J} \times \mathbf{B} 
+ \theta \pounds_\mathbf{w}\mathsf{X},
\end{equation}
where $ \mathbf{w} = \delta C/ \delta \mathbf{m} =\rho ^{-1} \mathbf{B}$ and $ \mathsf{X}= \rho [ \mathbf{u} , \rho ^{-1} \mathbf{B} ]^\flat$. Finally, one finds the energy decay rate, cf. \eqref{CasDis-compMHD},
\begin{equation}\label{ErgDis-compMHD}
\frac{d}{dt} h( \mathbf{m} , \rho , \eta , \mathbf{B} )=- \theta \int_ \mathcal{D} \frac{1}{\rho } | \mathsf{X}| ^2 \,d^3x 
\,.
\end{equation}

As one expects from Theorem \ref{SteadyStates}, the energy and Casimir decay rates for the cross-helicity case are again the same. 
\end{enumerate}

\section{Conclusions}

This paper has introduced modifications of the equations of ideal fluid dynamics with advected quantities to allow selective decay of either the energy $h$ or the Casimir quantities $C$ in its Lie-Poisson formulation. The quantity selected to be dissipated (energy or Casimir, respectively) is shown to decrease in time until the modified system reaches a { set of equilibrium states that contains the ideal energy-Casimir equilibria satisfying $\delta(h+C)=0$.} The result holds for Lie-Poisson equations in general, independently of the Lie algebra and the choice of Casimir. { The ideal energy-Casimir equilibrium conditions also produce equilibrium states of the selective decay equations. However, in certain situations the time-asymptotic states of the selective decay process satisfies} only some of the energy-Casimir conditions, not all of them. This is explained in the proof of the main result, Theorem \ref{SteadyStates}. 


We interpret the selective decay modifications of the equations as dynamical nonlinear parameterizations of the interactions among different scales, obtained by introducing new nonlinear pathways to dissipation. The modification process is illustrated with a number of selective decay examples that pass to equilibria for magnetohydrodynamics (MHD) in 2D and 3D by decay of either the energy or the Casimirs. 


Equations \eqref{Casimir_dissipation} and  \eqref{LP_form-hdis-mu} provide the constraint forces that will guide the ideal MHD system into a particular class of equilibria, by decreasing, respectively, either the Casimir at constant energy, or vice versa. The existence of a constraint force that will dynamically guide the ideal MHD system into a certain class of equilibria (or preserve it once it has been obtained) may provide useful ideas in the design and control of magnetic confinement devices. 

The Lagrange-d'Alembert variational principle extends Hamilton's principle to the case of forced systems, including nonholonomically constrained systems (\cite{Bl2004}). We have explained following \cite{FGBHo2012} in Section \ref{LdA-princ} how the constraint forces for Casimir-dissipative LP equations \eqref{Casimir_dissipation} and energy-dissipative LP equations \eqref{LP_form-hdis-mu} can be obtained from the Lagrange-d'Alembert principle.

The selective decay approach for fluid flows with advected quantities could also be useful in generating higher-order stable dissipative numerical schemes that generalize the 2D anticipated vorticity method to 3D and include the advected quantities. Thus, the numerical implementation of the present approach may lead to improved numerical schemes that dissipate energy, but conserve Casimirs. This could lead to generalizations for MHD of the anticipated vorticity method for parametrizing subgrid scale barotropic and baroclinic eddies in quasi-geostrophic models, \cite{SaBa1985}. 

In all of the present discussions, we have assumed that the solutions of the modified (dissipative) equations possess long-time existence. However, from the viewpoint of mathematical analysis, the possibility of blow up in finite time for these equations remains an open problem.

\subsection*{Acknowledgments} We are grateful to A. Arnaudon, C. J. Cotter, R. Hide, C. Tronci, G. Vallis and B. A. Wingate for fruitful and thoughtful discussions during the course of this work. We also thank A. Arnaudon for making the figures of the rigid body flows. FGB was partially supported by a ``Projet Incitatif de Recherche'' contract from the Ecole Normale Sup\'erieure de Paris. DDH is grateful for partial support by the European Research Council Advanced Grant 267382 FCCA.
 
\footnotesize

\bibliographystyle{new}
\addcontentsline{toc}{section}{References}

\end{document}